\documentclass[a4paper, 11pt]{article}
\usepackage{amsmath,amssymb, amsthm, amsfonts}
\usepackage{hyperref}
\usepackage{srcltx}
\usepackage{graphicx}
\usepackage[hang]{subfigure}
\usepackage{overpic}
\usepackage{color}
\usepackage{ulem}
\allowdisplaybreaks[4]

\newtheorem{theorem}{Theorem}
\newtheorem{lemma}{Lemma}
\newtheorem{definition}{Definition}

{\theoremstyle{remark} }

\newcommand\dint{\displaystyle\int}
\newcommand\constantsymbol{{\frac{m}{\hat{h}^3}}}
\newcommand\Temperature{RT}
\newcommand\zz{{\mathfrak{z}}}
\newcommand\Li{\mathrm{Li}}
\newcommand\Lihalf[1]{{\Li_{\frac{#1}{2}}}}
\newcommand\weight{\omega}
\newcommand\weightut[3]{{\weight^{[#1,#2,#3]}}}
\newcommand\weightzut{\weightut{\zz}{\bu}{\Temperature}}
\newcommand\mG{\mathcal{G}}

\newcommand\bbH{\mathbb{H}}
\newcommand\bbHut[3]{{\bbH^{[#1,#2,#3]}}}
\newcommand\bbHzut{\bbHut{\zz}{\bu}{\Temperature}}
\newcommand\mP{\mathcal{P}}
\newcommand\identity{{\bf I}}

\newcommand\bbR{\mathbb{R}}

\newcommand\bx{\boldsymbol{x}}

\newcommand\bc{\boldsymbol{c}}
\newcommand\bu{\boldsymbol{u}}
\newcommand\bv{\boldsymbol{v}}

\newcommand\bA{{\bf{A}}}
\newcommand\bB{{\bf{B}}}
\newcommand\bD{{\bf{D}}}
\newcommand\bM{{\bf{M}}}
\newcommand\dd{\,\mathrm{d}}

\newcommand\bw{\boldsymbol{w}}

\newcommand\pd[2]{\dfrac{\partial {#1}}{\partial {#2}}}
\newcommand\opd[2]{\dfrac{\dd {#1}}{\dd {#2}}}
\newcommand\rmspan{\mathrm{span}}

\setlength{\oddsidemargin}{0cm} 
\setlength{\evensidemargin}{0cm}
\setlength{\textwidth}{150mm}
\setlength{\textheight}{230mm}

\newcommand\comment[1]{}

\newcommand\revised[2]{#2} 
\newcommand\delete[1]{\revised{#1}{}}
\newcommand\add[1]{\revised{}{#1}}

\graphicspath{{images/}}

\title{13-Moment System with Global Hyperbolicity\\
 for Quantum Gas}

\author{Yana Di\thanks{LSEC, Institute of Computational
    Mathematics and Scientific/Engineering Computing,
    NCMIS, AMSS, Chinese Academy of Sciences, Beijing, China, email: {\tt yndi@lsec.cc.ac.cn}}, 
    ~~Yuwei Fan\thanks{School
    of Mathematical Sciences, Peking University, Beijing, China,
    email: {\tt ywfan@pku.edu.cn}.},
    ~~and Ruo Li\thanks{HEDPS \& CAPT,
    LMAM \& School of Mathematical Sciences, Peking University,
    Beijing, China, email: {\tt rli@math.pku.edu.cn}.}  }

\begin{document}
\maketitle
\begin{abstract}
We point out that the quantum Grad's 13-moment system
\cite{QuantumGrad13} is lack of global hyperbolicity, and even
worse, the thermodynamic equilibrium is not an interior point of
the hyperbolicity region of the system. To remedy this problem, 
\revised{we follow the new theory developed in
    \mbox{\cite{framework,Fan2015}} to propose a regularization for the
system.}{by fully considering Grad's expansion, we split the expansion
into the equilibrium part and the non-equilibrium part, and propose a
regularization for the system with the help of the new theory
developed in \cite{framework, Fan2015}.}
This provides us a new model which is hyperbolic for all admissible
thermodynamic states, and meanwhile preserves the approximate accuracy
of the original system. It should be noted that this procedure is not
a trivial application of the theory in \cite{framework, Fan2015}.

\vspace*{4mm}
\noindent {\bf Keywords:} 
quantum gas, quantum Boltzmann equation, Grad's 13-moment system,
global hyperbolicity, regularization.
\end{abstract}

\section{Introduction}
The behavior of dilute quantum gas can be modeled by the quantum
Boltzmann equation(QBE), which is also known as Uehling-Uhlenbeck
equation\cite{uehling1933}. Studies on the quantum kinetic equation may
promote the understanding of sorts of quantum effects, for example,
Bose-Einstein condensation(BEC)\cite{BEC1999}. However, the
development of quantum kinetic theory lags far behind the gas kinetic
theory. Due to the complexity of the distribution function, it is quite
different to derive quantum hydrodynamic equations from the QBE. The
progress in the study of gas kinetic theory should be illustrative to
the study of quantum kinetic theory. Recently, Yano has
extended the well-known Grad's 13-moment system in gas kinetic
theory to the quantum case to obtain the quantum Grad's 13-moment
system\cite{QuantumGrad13}, which is a new development and
application of Grad's moment method in quantum kinetic theory.

As is well known in gas kinetic theory, there are a number of defects
of Grad's 13-moment system, one of which is that Grad's 13-moment
system are not globally hyperbolic. Precisely, for 1D flow, the
hyperbolicity can only be obtained near the thermodynamic
equilibrium\cite{Muller}. And for 3D case, there is no neighborhood of
thermodynamic equilibrium contained in the hyperbolicity
region\cite{Grad13toR13}. In this paper, we investigate the
hyperbolicity of the quantum Grad's 13-moment system.  By studying the
Jacobian of the convection part of the quantum Grad's 13-moment
system, we find that the quantum Grad's 13-moment system shares the same
defect as the classical Grad's 13-moment system. For 1D flow, the
hyperbolicity of the quantum Grad's 13-moment system can always be
obtained near the equilibrium, and the hyperbolicity region depends on
the equilibrium state \add{due to the quantum effects}, while for the
classical Grad's 13-moment system, the hyperbolicity region only
depends on the non-equilibrium part. For 3D case, \revised{no such
neighborhood of the equilibrium exists in the hyperbolicity
region}{the system is not hyperbolic in any neighborhood of the
equilibrium; that is to say, the equilibrium is on the boundary of the
hyperbolicity region}.

The loss of hyperbolicity directly breaks the well-posedness of the
partial differential equations, thus it is hard to apply the quantum
Grad's 13-moment model to practical problems. In gas kinetic theory,
\add{by investigating the coefficient matrix of the moment system,} a
globally hyperbolic regularization has been proposed to settle the
loss of hyperbolicity of Grad's moment system \cite{Fan, Fan_new}.
\revised{And the regularization has been extended to a framework on
  moment model reduction from kinetic equations to globally hyperbolic
  hydrodynamic system \mbox{in \cite{framework, Fan2015}}}{In
  \cite{framework, Fan2015}, it was revealed that the essential of the
  regularization is to treat the time and space derivatives in the
  same way, then the authors extended the regularization to a
  framework on moment model reduction from kinetic equations to
  globally hyperbolic hydrodynamic system}.  Using the theory therein,
one of the tasks in this paper is to deduce a globally hyperbolic
hydrodynamic model from the quantum Boltzmann equation. Unfortunately,
if one applies the framework \cite{framework, Fan2015} in a trivial
way, the resulting system would be not compatible with Grad's
13-moment system \delete{. This is} due to the particular formation of
the quantum thermodynamic equilibrium.  \add{Precisely, the linearized
  equations of the resulting system at the equilibrium is different
  from those of Grad's 13-moment system, and the first step of
  Maxwellian iteration\cite{Ikenberry} of the resulting system fails
  to give the correct NSF law.}

\revised{As one way out,}{Since the equilibrium plays an essential
role in the quantum Grad's expansion,} we split the expansion into two
parts: the equilibrium part and the non-equilibrium part. 
\delete{and propose a new regularization to obtain the
regularized 13-moment system. By applying the technique in
\mbox{\cite{Grad13toR13}}, we prove that the new moment system is
hyperbolic for any admissible state.}
\add{By applying the framework \cite{framework,Fan2015} only on the
non-equilibrium part, we obtained a new regularized 13-moment system,
which is proved to be hyperbolic for any admissible state.}
\add{The linearized equations of the regularized 13-moment model are
the same as those of the quantum Grad's 13-moment system. Hence,}
the regularized 13-moment model differs from the quantum Grad's
13-moment model by only some high-order terms, and thus the new system
preserves most of the merits of the original model. As an example of
these merits, we show that the first step of Maxwellian iteration
\cite{Ikenberry} of the new model also yields the correct NSF law.
\add{Moreover, we bring a new observation on the globally
hyperbolic regularization that the key point of
the regularization is to split the convection operator into the
product of the multiplying velocity operator and the space derivative
operator.}

The layout of this paper is as follows. In Sect. \ref{sec:Grad13},
an overview of the quantum Boltzmann equation and the quantum Grad's
13-moment system are given. In Sect. \ref{sec:hyperbolicity}, the
hyperbolicity of the quantum Grad's 13-moment system is studied in
details \add{both for 1D and 3D case}. The hyperbolic regularization for the
quantum Grad's 13-moment system and the regularized 13-moment system
are then proposed and discussed in Sect. \ref{sec:regularization}.
The difference between the regularized system and the original system
is discussed.  Furthermore, the hyperbolicity and the physical
properties of the regularized 13-moment system are also studied. At
last, a conclusion is presented in Sect. \ref{sec:conclusion}.


\section{Grad's 13-Moment System for QBE}
\label{sec:Grad13}
Denoting the distribution function of quantum gas particles by $f(t,\bx,\bv)$, 
which describes the probability density to find a particle at space point
$\bx$ and time $t$ with velocity $\bv$, we have the quantum Boltzmann equation, 
the well-known Uehling-Uhlenbeck equation \cite{uehling1933}
\begin{equation}\label{eq:QBE}
    \pd{f}{t}+\bv\cdot\nabla_{\bx}f=Q(f,f),
\end{equation}
where the collision term is defined as
\begin{equation}
    Q(f,f) = \int_{\bbR^3}\int_{0}^{2\pi}\int_{0}^{\pi}\left[
        (1-\theta f)(1-\theta f_*)f'f_*' -(1-\theta f')(1-\theta f_*')ff_*
        \right] g\sigma\sin\chi\dd\chi\dd\epsilon\dd\bv_*.
\end{equation}
Here $f$, $f'$, $f_*$ and $f_*'$ are the shorthand notations for
$f(t,\bx,\bv)$, $f(t,\bx,\bv')$, $f(t,\bx,\bv_*)$ and
$f(t,\bx,\bv_*')$. $(\bv,\bv_*)$ and $(\bv',\bv_*')$ are the velocities
before and after collision. $\epsilon$ is the scattering angle, $\chi$
is the deflection angle, $g=|\bv-\bv_*|$ and $\sigma$ is the
differential cross section. $\theta=1,0,-1$ correspond to Fermion,
classical gas and Boson, respectively. Although Fermion and Boson are major considerations of the paper, almost all the results are also valid for
classical gases.

Similar to the classical Boltzmann equation, the quantum collision
term conserves mass, momentum and energy, i.e:
\begin{equation}\label{eq:collisioninvariants}
    \int_{\bbR}Q(f,f) \dd\bv = 0,\quad
    \int_{\bbR}Q(f,f) v_i\dd\bv = 0,~i=1,2,3,\quad
    \int_{\bbR}Q(f,f) |\bv|^2\dd\bv = 0,
\end{equation}
where $1$, $v_i(i=1,2,3)$, $|\bv|^2$ are called collision invariants. 
If we define the density $\rho$, the macroscopic velocity $\bu$ and the pressure
$p$ as 
\begin{equation}
        \rho=\constantsymbol\int_{\bbR^3}f\dd\bv,\quad
        \rho \bu=\constantsymbol \int_{\bbR^3}f\bv\dd\bv,\quad
        p =\frac{1}{3}\constantsymbol\int_{\bbR^3}f|\bv-\bu|^2\dd\bv,
\end{equation}
where $\hat{h}=h/m$, and $m$ is the mass of the particle, $h$ is Planck's constant,
and define the pressure tensor $p_{ij}$ and the heat flux $q_i$ as
\begin{equation}
    p_{ij}=\constantsymbol\int_{\bbR^3}fc_ic_j\dd\bv,~i,j=1,2,3,\quad 
    q_i = \frac{1}{2}\constantsymbol\int_{\bbR^3}fc_i|\bc|^2\dd\bv,~i=1,2,3,
\end{equation}
where $c_i=v_i-u_i$, $i=1,2,3$, then 
\begin{subequations}\label{eq:EulerEqs}
    \begin{align}
        \opd{\rho}{t}&+\rho\pd{u_d}{x_d}=0,\\
        \rho\pd{u_i}{t}&+\pd{p_{id}}{x_d}=0,~i=1,2,3,\\
        \opd{p}{t}&+\frac{2}{3}p_{id}\pd{u_i}{x_d}+p\pd{u_d}{x_d}+\frac{2}{3}\pd{q_d}{x_d}=0,
    \end{align}
\end{subequations}
where $\opd{\cdot}{t}=\pd{\cdot}{t}+u_d\pd{\cdot}{x_d}$.
The thermodynamic equilibrium is 
\begin{equation}
    f_{eq}=\frac{1}{\zz^{-1}\exp\left(
    \frac{|\bv-\bu|^2}{2\Temperature}
\right)+\theta},
\end{equation}
where $\zz$ and $\Temperature$ is related to $\rho$ and $p$ as 
\begin{equation}\label{eq:relation_rhopzt}
    \rho = \constantsymbol\sqrt{2\pi\Temperature}^3\Lihalf{3},\quad 
    p=\constantsymbol\sqrt{2\pi\Temperature}^3\Temperature \Lihalf{5},
\end{equation}
and $\Li_s:=-\theta\Li_s(-\theta\zz)$ is the polylogarithm.
For the special case $\theta=0$, let $\Li_s=\zz$.

In this paper, we focus on 13-moment system, where 13 moments are
$\rho, u_i, p_{ij}$ and $q_i$, $i,j=1,2,3$. Defining
\begin{equation}\label{eq:def_qijk_delta}
    \sigma_{ij}=p_{ij}-p\delta_{ij},\quad
    q_{ijk}=\constantsymbol\int_{\bbR^3}fc_ic_jc_k\dd\bv,\quad 
    \Delta_{ij}=\constantsymbol\int_{\bbR^3}fc_ic_j|\bc|^2\dd\bv,
\end{equation}
we have
\begin{subequations}\label{eq:sigmaq}
    \begin{align}
        \opd{\sigma_{ij}}{t}&+\sigma_{ij}\pd{u_d}{x_d}+2p_{d\langle
        j}\pd{u_{i\rangle}}{x_d}
        +\pd{q_{ijd}}{x_d}-\frac{2}{3}\pd{q_d}{x_d}
        =Q_{ij}^{(2)},~i,j=1,2,3,\\
        \opd{q_i}{t}&+\frac{5}{2}p\opd{u_i}{t}+\sigma_{ij}\opd{u_j}{t}
        +2q_{(i}\pd{u_{d)}}{x_d}+q_{ijd}\pd{u_j}{x_d}
        +\frac{1}{2}\pd{\Delta_{id}}{x_d}=Q_i^{(3)},
        ~i=1,2,3,
    \end{align}
\end{subequations}
where $Q^{(2)}_{ij}=\dint_{\bbR^3}c_ic_jQ(f,f)\dd\bv$ and
$Q^{(3)}_i=\dfrac{1}{2}\dint_{\bbR^3}c_i|\bc|^2Q(f,f)\dd\bv$.
Particularly, for the quantum Bhatnagar-Gross-Krook model\cite{QBGK}, 
\[
    Q^{(2)}_{BGK,ij}=-\frac{1}{\tau}\sigma_{ij},\quad
    Q^{(3)}_{BGK,i}=-\frac{1}{\tau}q_i.
\]

\revised{Equations}{Eqs.} \eqref{eq:EulerEqs} and \eqref{eq:sigmaq} are
the 13-moment system, with $q_{ijk}$ and $\Delta_{ij}$ are
undetermined.  In \cite{QuantumGrad13}, Yano extended Grad's
expansion\cite{Grad} into the quantum case to obtain the quantum
Grad's 13-moment system.  Actually, Yano gave two kinds of expansions
\cite{QuantumGrad13}:
\begin{align}
    f_{G13}^{H1} &= f_{eq}\left( 1
    +\frac{\sigma_{ij}}{2p}\frac{\Lihalf{5}}{\Lihalf{7}}\left(\frac{c_ic_j}{\Temperature}-\delta_{ij}\frac{\Lihalf{5}}{\Lihalf{3}}\right)
    +\frac{\mathfrak{B}^{-1}q_i}{5p\Temperature}
    c_i\left( \frac{|\bc|^2}{\Temperature}-5\frac{\Lihalf{7}}{\Lihalf{5}}\right)
     \right),
    \label{eq:Grad13expansion1}\\
    f_{G13}^{H2} &= f_{eq}+f_{eq}(1-\theta f_{eq})\left(
    \frac{\sigma_{ij}}{2p}\left(\frac{c_ic_j}{\Temperature}-\delta_{ij}\frac{|\bc|^2}{3\Temperature}\right) 
    + \frac{\mathfrak{B}_2^{-1}q_i}{5p\Temperature}c_i\left(
    \frac{|\bc|^2}{\Temperature}-5\frac{\Lihalf{5}}{\Lihalf{3}}\right)\right),
    \label{eq:Grad13expansion2}
\end{align}
where
$\mathfrak{B}=7/2(\Lihalf{9}/\Lihalf{5})-5/2(\Lihalf{7}^2/\Lihalf{5}^2)$,
and $\mathfrak{B}_2=7/2(\Lihalf{7}/\Lihalf{5})-5/2(\Lihalf{5}/\Lihalf{3})$.
Both of the expansions degenerate into Grad's expansion\cite{Grad} for classical gases. In \cite{QuantumGrad13}, the expansion \eqref{eq:Grad13expansion1} was used to derive the
quantum Grad's 13-moment system, and the expansion
\eqref{eq:Grad13expansion2} to assist the analysis of
Grad's moment system (essentially for Chapman-Enskog
expansion\cite{Chapman} and collision term\cite{QuantumFP}). However,
for the quantum Boltzmann equation, these two different expansions
would result in different moment systems. 
Substituting the expansion \eqref{eq:Grad13expansion1} or
\eqref{eq:Grad13expansion2} into \eqref{eq:def_qijk_delta} yields the
expression of $q_{ijk}$ and $\Delta_{ij}$ directly. For the expansion
\eqref{eq:Grad13expansion1}, the moment closure is 
\begin{equation}\label{eq:Grad13closure1}
    \begin{aligned}
        q_{ijk} &= \frac{2}{5}\left(
        \delta_{ij}q_k+\delta_{ik}q_j+\delta_{kj}q_i\right),\\
        \Delta_{ij}&=\add{\constantsymbol}\sqrt{2\pi\Temperature}^3\Temperature^2\Lihalf{7}\left(
        5\delta_{ij}+7\frac{\sigma_{ij}}{p}\frac{\Lihalf{5}\Lihalf{9}}{\Lihalf{7}^2}
         \right).
    \end{aligned}
\end{equation}
For the expansion \eqref{eq:Grad13expansion2}, the moment closure is 
\begin{equation}\label{eq:Grad13closure2}
    \begin{aligned}
        q_{ijk} &= \frac{2}{5}\left(
        \delta_{ij}q_k+\delta_{ik}q_j+\delta_{kj}q_i\right),\\
        \Delta_{ij}&=\add{\constantsymbol}\sqrt{2\pi\Temperature}^3\Temperature^2\Lihalf{7}\left(
        5\delta_{ij}+7\frac{\sigma_{ij}}{p} \right).
    \end{aligned}
\end{equation}
Both the system \eqref{eq:EulerEqs} and \delete{the system}
\eqref{eq:sigmaq} with the moment closure \eqref{eq:Grad13closure1} or
\eqref{eq:Grad13closure2} are Grad's 13-moment system for the quantum
Boltzmann equation. 

In this paper, we follow the work in \cite{QuantumGrad13}, and only consider the
case with the moment closure \eqref{eq:Grad13closure1}. All the deductions
below can be extended to the moment system with the moment closure
\eqref{eq:Grad13closure2} without essential difficulties.


\section{Hyperbolicity of Grad's 13-Moment System}
\label{sec:hyperbolicity}
As is well known, loss of hyperbolicity, which leads to the loss of
well-posedness of the model, is a severe drawback of Grad's 13-moment
system\cite{Muller}. In \cite{Grad13toR13}, the authors pointed out
that the thermodynamic equilibrium is not an interior point of the
hyperbolicity region of Grad's 13-moment system. Therefore, any small
perturbation of the equilibrium state may lead to the loss of
hyperbolicity, thus the existence of the solution of Grad's 13-moment
system is hardly achieved. In this section, we investigate the
hyperbolicity of the quantum Grad's 13-moment system with the moment
closure \eqref{eq:Grad13closure1}, and the results can be extended to
the system with the moment closure \eqref{eq:Grad13closure2} with routine
calculations. Let us first define the hyperbolicity as follows.
\begin{definition}[Hyperbolicity]
\label{def:hyperbolicity}
A system of first order quasi-linear partial differential equations
\[
\pd{\bw}{t}+\bA_d(\bw)\pd{\bw}{x_d}=0
\]
is called {\it hyperbolic} in some region $\Omega$ if and only if any
linear combination of $\bA_d(\bw)$ is diagonalizable with real
eigenvalues for all $\bw\in\Omega$.
\end{definition}

\subsection{Hyperbolicity for 1D case}
For 1D case, the quantum Grad's 13-moment system reduces to a smaller
system containing only five equations, which can be obtained by setting
$u_2=u_3=0$, $p_{12}=p_{12}=p_{23}=0$, $q_2=q_3=0$ and
$p_{22}=p_{33}$. Denoting $\bw_5=(\rho, u_1, p_{11}, q_1, p)^T$, one has
\begin{equation}\label{eq:ms_5}
    \pd{\bw_5}{t}+\bA_5\pd{\bw_5}{x_1}=\boldsymbol{Q}_5,
\end{equation}
where $\boldsymbol{Q}_5=(0,0,Q_{11}^{(2)},Q_{1}^{(3)},0)^T$, and 
\begin{equation}
    \bA_5=
    \begin{pmatrix} 
        u_1&\rho&0&0&0\\ 
        0&u_1&\frac{1}{\rho}&0&0\\ 
        0&3p_{11}&u_1&6/5&0 \\ 
        -a_1 &{\frac{16}{5}}q_1& a_2 &u_1&a_3\\ 
        0&p+\frac{2}{3}p_{11}&0&2/3&u_1
    \end {pmatrix},
\end{equation}
\[
    \begin{aligned}
        a_1&=
        \frac{5p\Temperature \mathfrak{b}}{2\rho}
        \left(\frac{7}{2}\frac{\Lihalf{3}^2\Lihalf{7}}{\Lihalf{1}\Lihalf{5}^2}-\frac{5}{2}\frac{\Lihalf{3}}{\Lihalf{1}}\right) 
        + \frac{7\sigma_{11}\Temperature \mathfrak{b}}{2\rho}
        \left(\frac{\Lihalf{3}^2\Lihalf{9}}{\Lihalf{1}\Lihalf{5}\Lihalf{7}}-\frac{5}{2}\left(\frac{\Lihalf{3}}{\Lihalf{1}}-\frac{\Lihalf{3}\Lihalf{5}\Lihalf{9}}{\Lihalf{1}\Lihalf{7}^2}\right)\right),
        \\
        a_2&=\frac{7\Temperature\Lihalf{9}}{2\Lihalf{7}}-\frac{3p}{2\rho}-\frac{p_{11}}{\rho},\\
        a_3&=
        \frac{5\Temperature}{2}\left( (1+\mathfrak{b})\frac{\Lihalf{7}}{\Lihalf{5}}  
        -
        \frac{3\mathfrak{b}}{2}\frac{\Lihalf{3}}{\Lihalf{1}}\left(1-\frac{\Lihalf{3}\Lihalf{7}}{\Lihalf{5}^2}\right) \right)
        \\
        &\qquad+
        \frac{7\Temperature}{2}\left(\left(\frac{\sigma_{11}\mathfrak{b}}{p}-1\right)\frac{\Lihalf{9}}{\Lihalf{7}}
        -
        \frac{3\mathfrak{b}}{2}\frac{\Lihalf{3}}{\Lihalf{1}}\frac{\sigma_{11}}{p}\left(1-\frac{\Lihalf{5}\Lihalf{9}}{\Lihalf{7}^2}\right)
        \right).
    \end{aligned}
\]
Here
$\mathfrak{b}=\frac{2}{5-3\frac{\Lihalf{3}^2}{\Lihalf{1}\Lihalf{5}}}$.
The characteristic polynomial of $\bA_5$ can be directly calculated as
\begin{equation}
    \begin{aligned}
    |\lambda\identity-\bA_5| &=
    \frac{1}{75}(\lambda-u_1) \left(
    75(\lambda-u_1)^{4}-\left(90{a_2}+50a_3+225\frac{p_{11}}{\rho}\right)(\lambda-u_1)^{2}
    \right)\\
    &\qquad\qquad\qquad\qquad\qquad\left.
    +90\left(a_1+a_3\frac{\sigma_{11}}{\rho}\right)+288\frac{q_1}{\rho}(\lambda-u_1)
    \right).
    \end{aligned}
\end{equation}
Introducing the dimensionless quantity, 
$\hat{\lambda}=(\lambda-u_1)/\sqrt{\Temperature}$, 
\add{we} reduces $|\lambda\identity-\bA_5|=0$ into 
\begin{equation}\label{eq:cp_5}
    \hat{\lambda}\left[
    75\hat{\lambda}^{4}\Temperature^2-\left(90{a_2}+50a_3+225\frac{p_{11}}{\rho}\right)\hat{\lambda}^{2}\Temperature
    +90\left(a_1+a_3\frac{\sigma_{11}}{\rho}\right)+288\frac{q_1}{\rho}\hat{\lambda}\sqrt{\Temperature}
    \right]=0.
\end{equation}
For the special case $\sigma_{11}=0$, $q_1=0$, i.e. the thermodynamic
equilibrium state, \eqref{eq:cp_5} reduces into
\begin{equation}\label{eq:cp_5-2}
    \hat{\lambda}\left[
    75\hat{\lambda}^{4}\Temperature^2-\left(90{a_2}+50a_3+225\frac{p}{\rho}\right)\hat{\lambda}^{2}\Temperature
    +90a_1
    \right]=0.
\end{equation}
Since 
\[
    90a_1>0,\quad 90a_2+50a_3+225\frac{p}{\rho}>0,
\]
and it is easy to check 
\[
    \Delta_{G5} := \left(90a_2+50a_3+225\frac{p}{\rho}\right)^2-4\cdot
    75\Temperature^2\cdot 90a_1>0,
\]
Eq. \eqref{eq:cp_5-2} has five distinct zeros, \add{which read 
\begin{equation}
    \begin{aligned}
    \hat{\lambda}_{1,5} &=
    \pm\sqrt{\frac{90a_2+50a_3+225\frac{p}{\rho}+\sqrt{\Delta_{G5}}}{150\Temperature^2}},\quad
    \hat{\lambda}_3 = 0,\\
    \hat{\lambda}_{2,4} &=
    \pm\sqrt{\frac{90a_2+50a_3+225\frac{p}{\rho}-\sqrt{\Delta_{G5}}}{150\Temperature^2}}.
    \end{aligned}
\end{equation}
In this case, $\bA_5$ has five distinct eigenvalues as $\lambda_i =
u_1+\sqrt{\Temperature}\hat{\lambda}_i$, $i=1,2,\cdots,5$.
}
If $\sigma_{11}/p$ and
$q_1 /(p\sqrt{\Temperature})$ are small enough, the zeros of
\eqref{eq:cp_5} are still real and separable. Hence, the matrix
$\bA_5$ has no multiple eigenvalue around the equilibrium, which means
the system \eqref{eq:ms_5} is hyperbolic around the equilibrium.
Moreover, the hyperbolicity of \eqref{eq:ms_5} depends only on
$\zz$, $\sigma_{11}/p$ and $q_1 / (p\sqrt{\Temperature})$. For a given
$\zz$, we can directly plot the hyperbolicity region of
\eqref{eq:ms_5}. Figs. \ref{fig:hrBose} and \ref{fig:hrFermi} give the
hyperbolicity regions of the system \eqref{eq:ms_5} with $\theta=-1$,
$1$ for some given $\zz$, respectively. It is reminded that
$\sigma_{11}/p\in(-1,2)$ due to the fact that the pressure tensor
$p_{ij}$ is positive definite, and $q_1
/(p\sqrt{\Temperature})\in\bbR$.

\begin{figure}
    \subfigure[$\zz=0.1$]{ 
        \begin{overpic}[width=.3\textwidth]{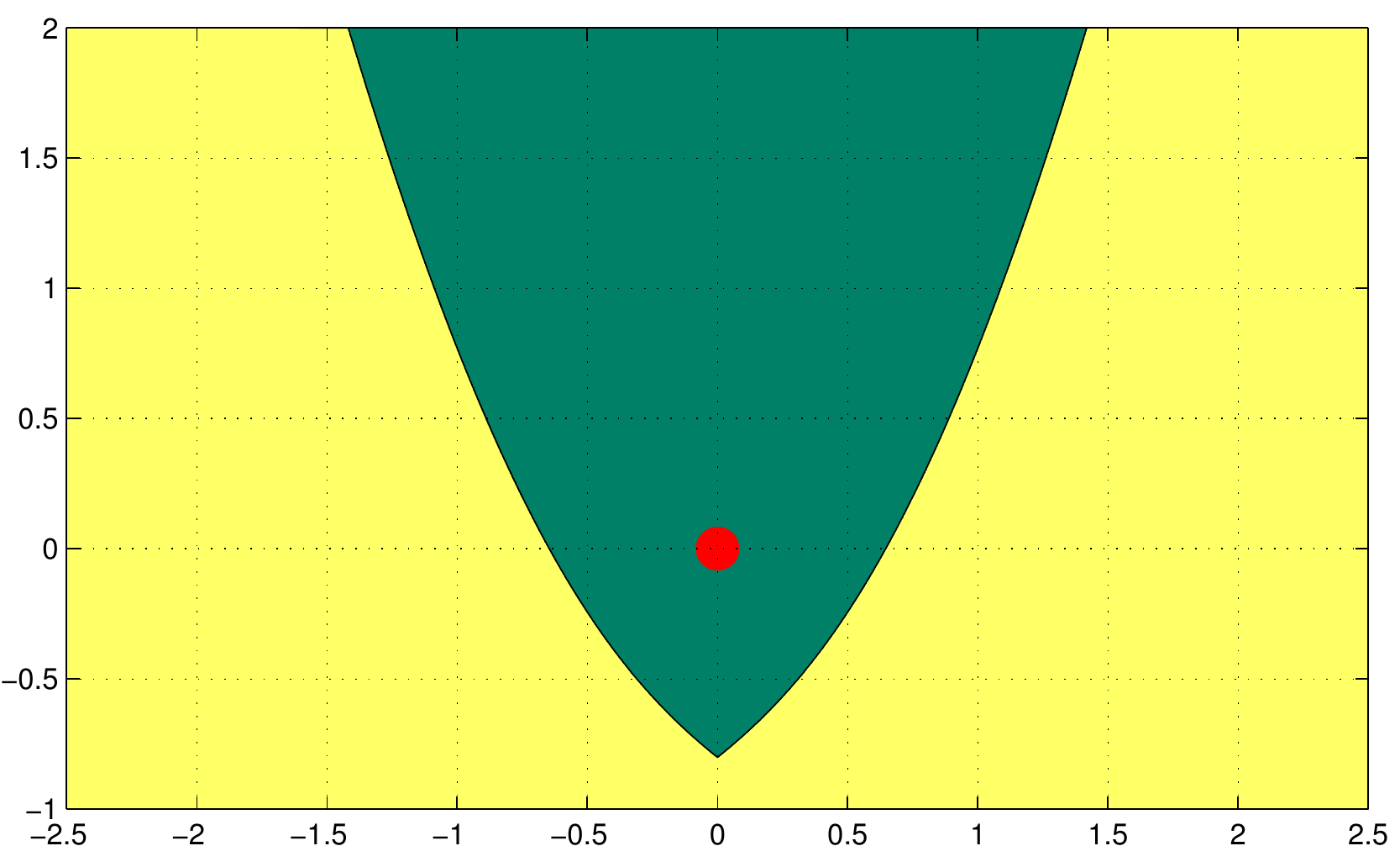}
            \put(76,-4){\tiny$\frac{q_1}{p\sqrt{\Temperature}}$}
            \put(-7,43){\tiny$\frac{\sigma_{11}}{p}$}
        \end{overpic}
    }
    \subfigure[$\zz=0.5$]{ 
        \begin{overpic}[width=.3\textwidth]{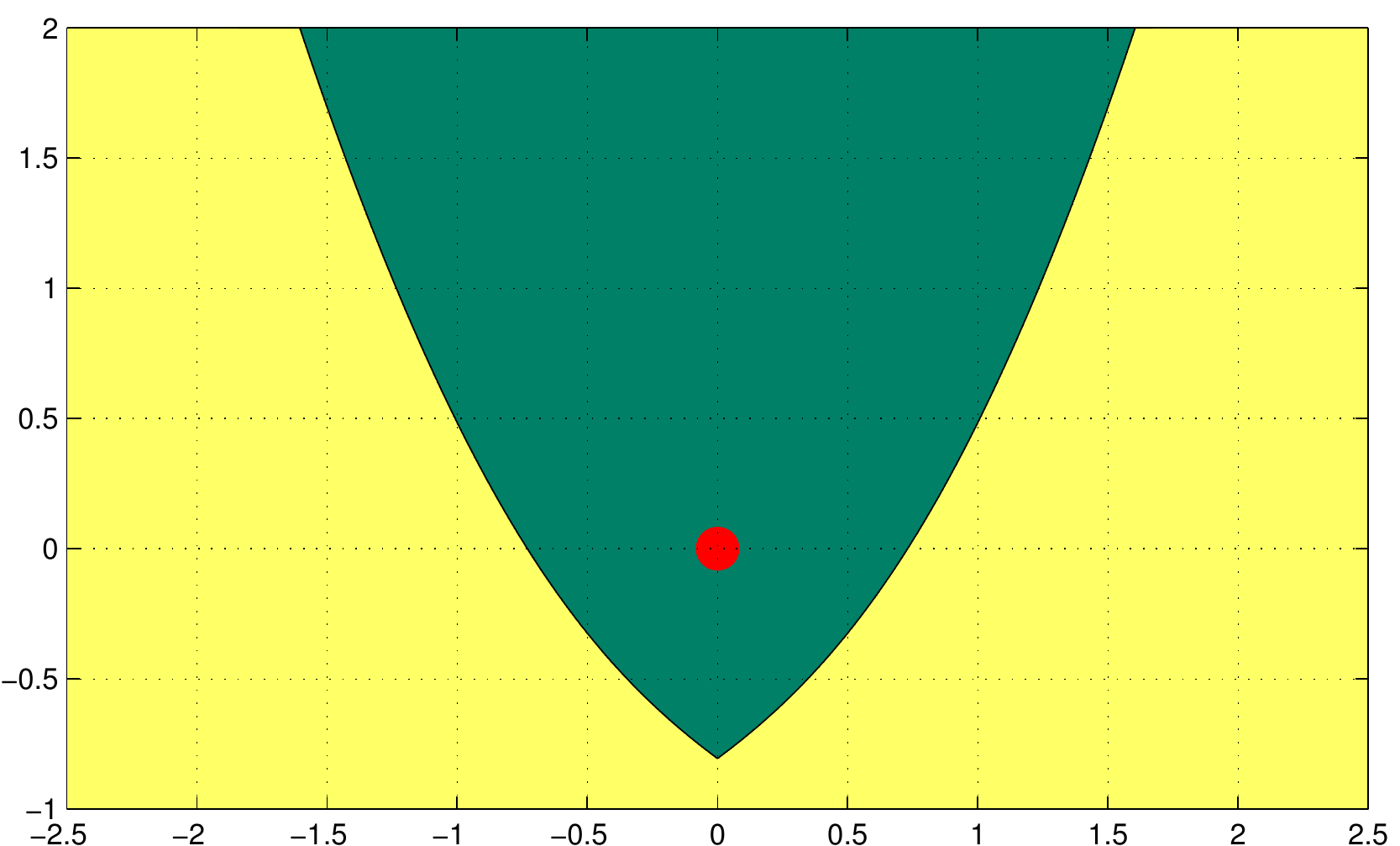}
            \put(76,-4){\tiny$\frac{q_1}{p\sqrt{\Temperature}}$}
            \put(-7,43){\tiny$\frac{\sigma_{11}}{p}$}
        \end{overpic}
    }
    \subfigure[$\zz=0.9$]{ 
        \begin{overpic}[width=.3\textwidth]{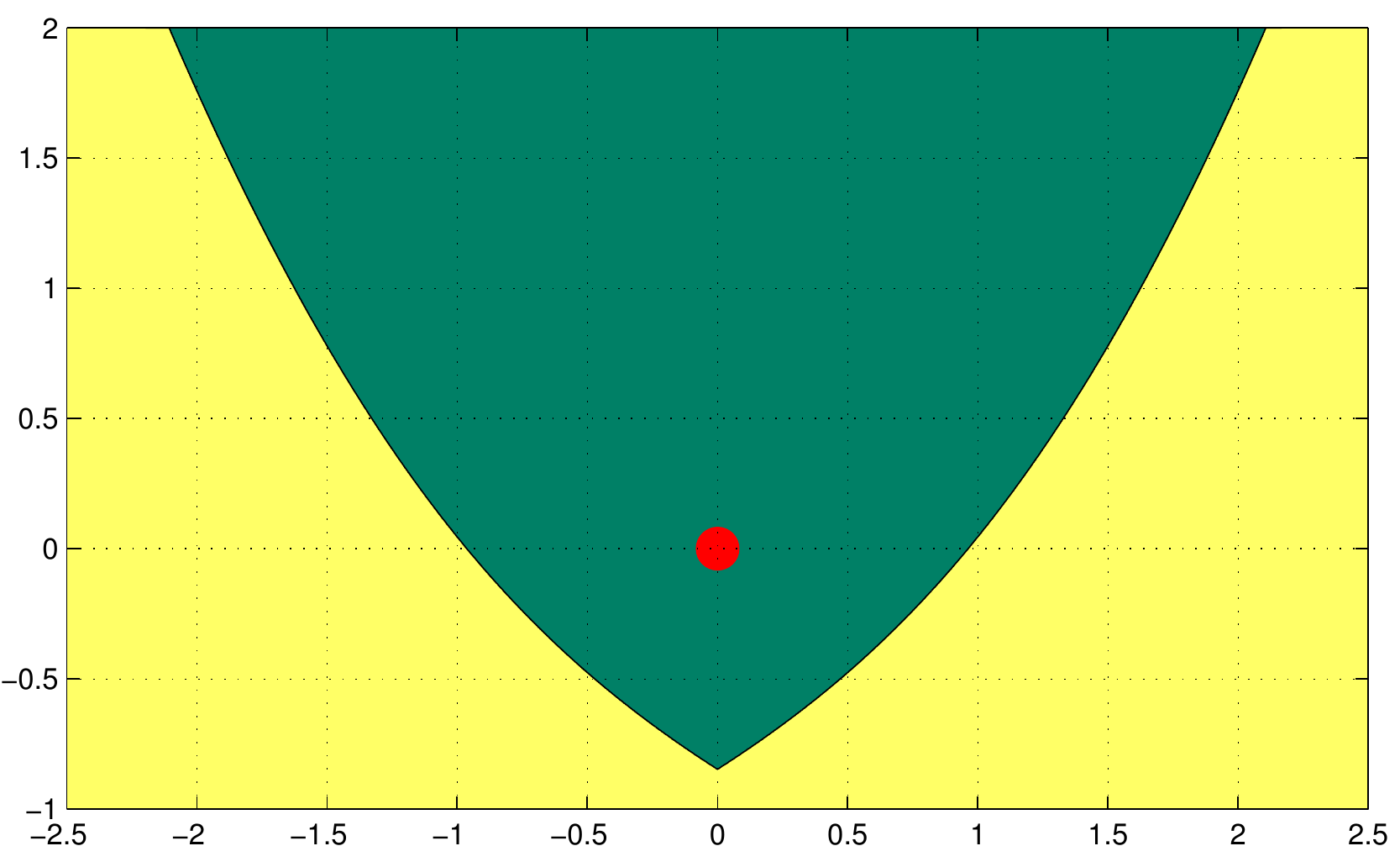}
            \put(76,-4){\tiny$\frac{q_1}{p\sqrt{\Temperature}}$}
            \put(-7,43){\tiny$\frac{\sigma_{11}}{p}$}
        \end{overpic}
    }
    \caption{\label{fig:hrBose} Hyperbolicity region of
    \eqref{eq:ms_5} with $\theta=-1$(Boson) for $\zz=0.1,0.5,0.9$. 
    \add{The blue part and the yellow part denote the hyperbolicity region
    and the non-hyperbolicity region, respectively, and the red point
    corresponds to the equilibrium.}}
\end{figure}

\begin{figure}
    \subfigure[$\zz=0.1$]{ 
        \begin{overpic}[width=.3\textwidth]{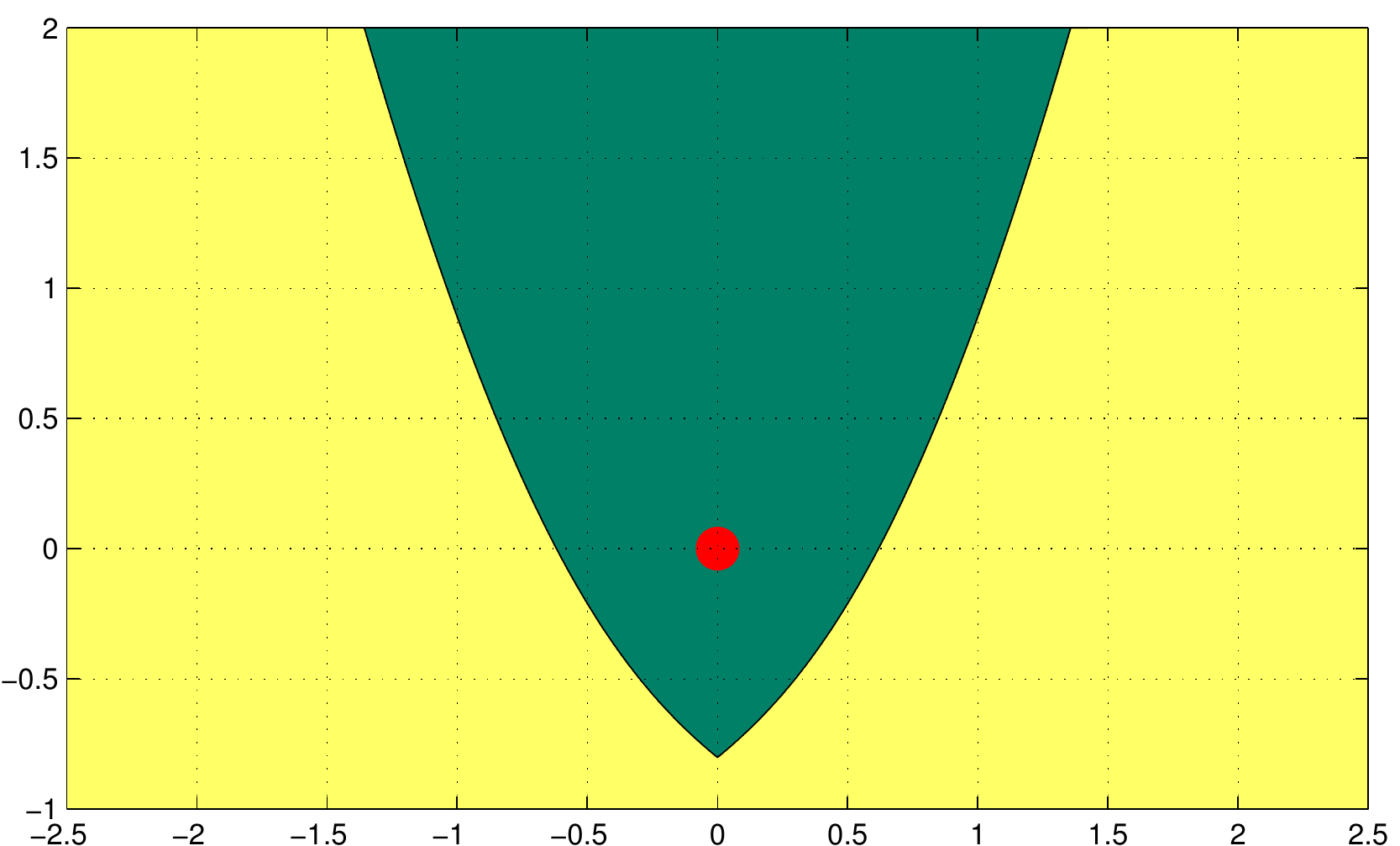}
            \put(76,-4){\tiny$\frac{q_1}{p\sqrt{\Temperature}}$}
            \put(-7,43){\tiny$\frac{\sigma_{11}}{p}$}
        \end{overpic}
    }
    \subfigure[$\zz=0.5$]{ 
        \begin{overpic}[width=.3\textwidth]{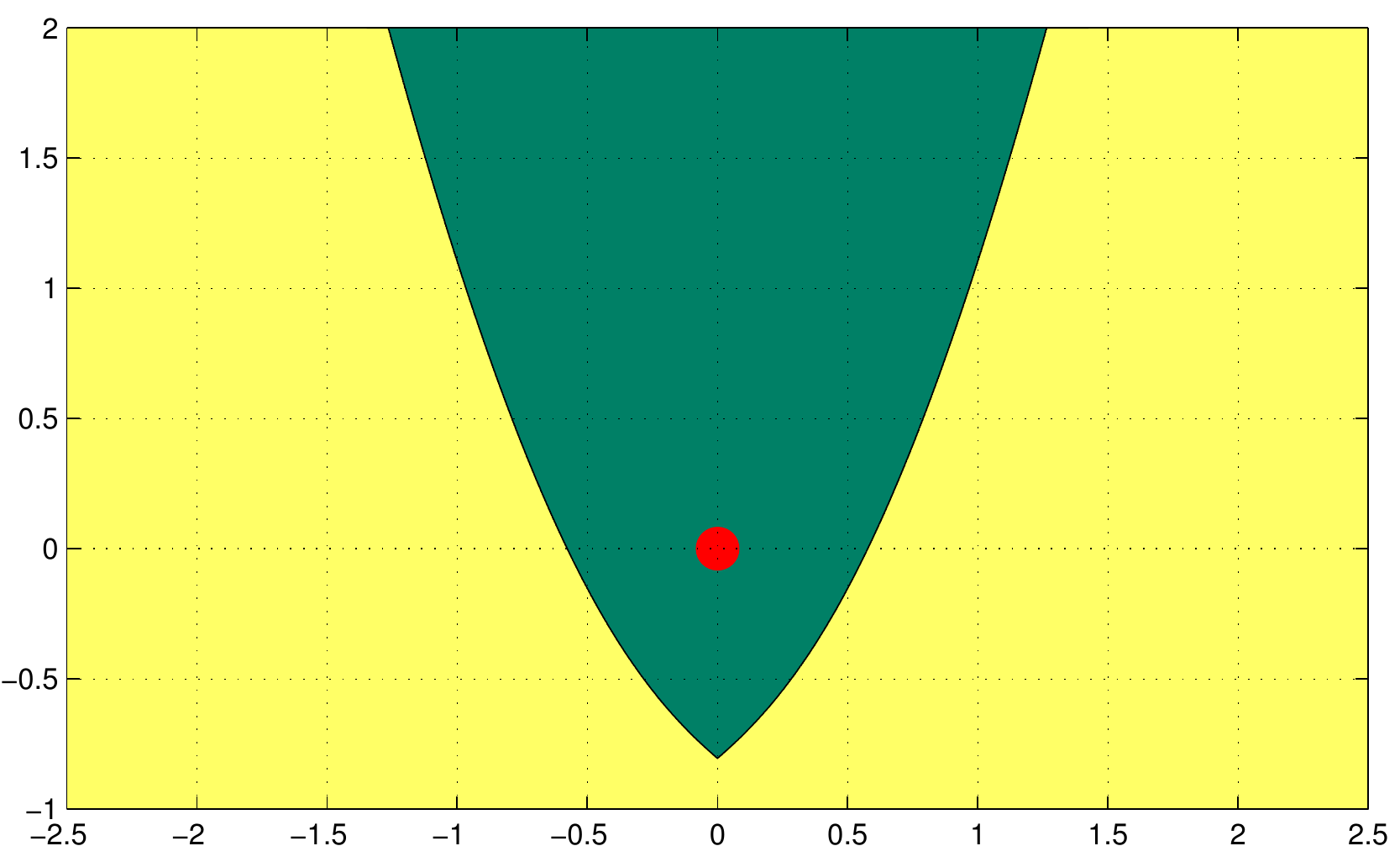}
            \put(76,-4){\tiny$\frac{q_1}{p\sqrt{\Temperature}}$}
            \put(-7,43){\tiny$\frac{\sigma_{11}}{p}$}
        \end{overpic}
    }
    \subfigure[$\zz=2$]{ 
        \begin{overpic}[width=.3\textwidth]{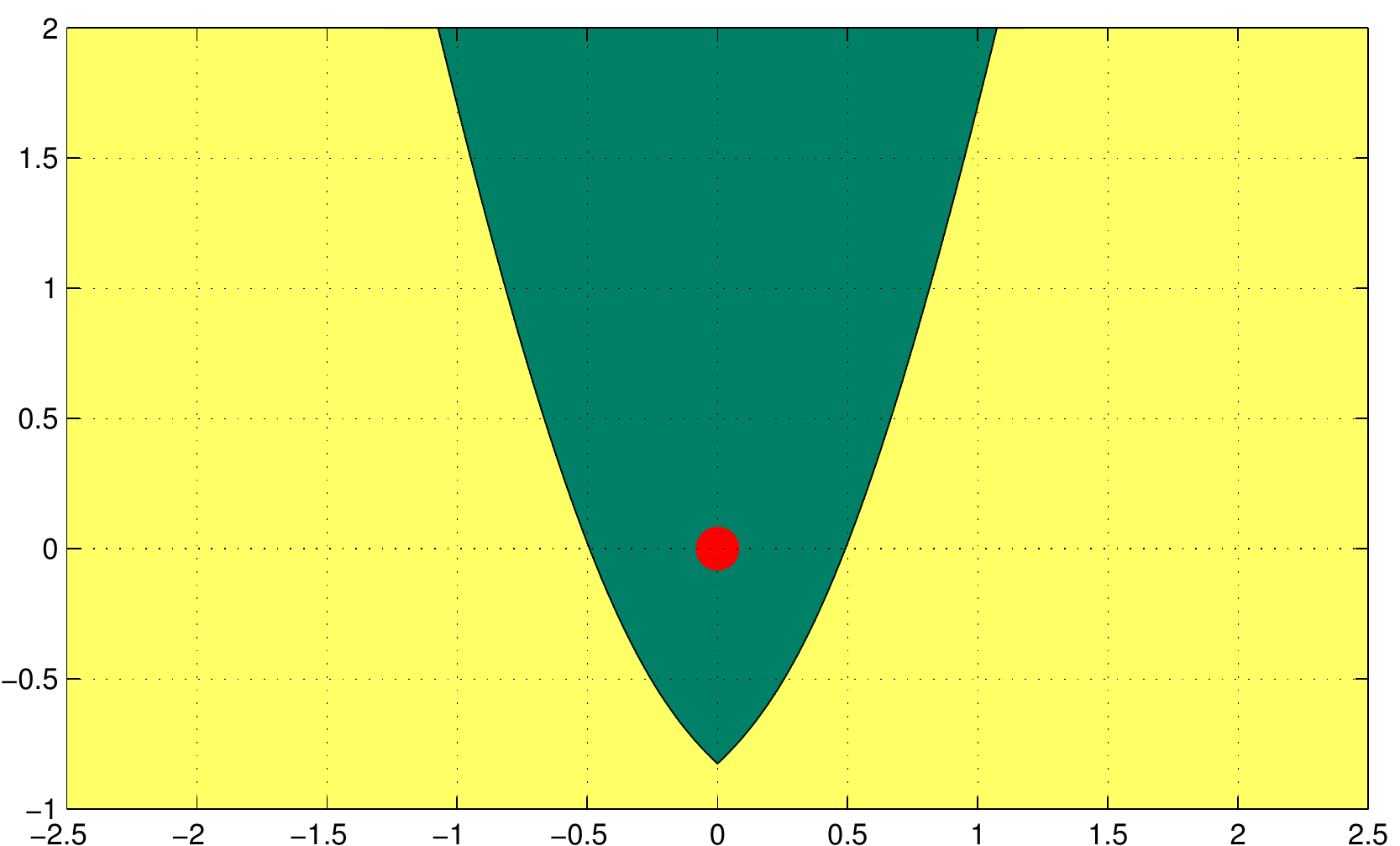}
            \put(76,-4){\tiny$\frac{q_1}{p\sqrt{\Temperature}}$}
            \put(-7,43){\tiny$\frac{\sigma_{11}}{p}$}
        \end{overpic}
    }
    \caption{\label{fig:hrFermi}
    Hyperbolicity region of \eqref{eq:ms_5} with $\theta=1$(Fermion)
    for $\zz=0.1,0.5,2$.
    \add{The blue part and the yellow part denote the hyperbolicity region
    and the non-hyperbolicity region, respectively, and the red point
corresponds to the equilibrium.}}
\end{figure}

Similar to the Grad's 13-moment system for classical gasses
\cite{Muller}, for the quantum case, the thermodynamic equilibrium is
an interior point of the hyperbolicity region, but the system is not
hyperbolic for all admissible $\sigma_{11}$ and $q_1$. Additionally,
as shown in Figs. \ref{fig:hrBose} and \ref{fig:hrFermi}, the
hyperbolicity region expands as $\zz$ increasing for Boson, while the
hyperbolicity region shrinks as $\zz$ increasing for Fermion. 
Therefore, the hyperbolicity region depends on $\zz$. However, for the
classical Grad's 13-moment system, the hyperbolicity region does not
depend on the equilibrium. \add{It is the performance of the quantum
effect. It is well-known that for classical gas the sum of two
equilibrium distribution with the same mean velocity and temperature
is still an equilibrium, which does not hold any more for quantum
gas due to the quantum effect.
And as the fugacity decreases, the quantum effect weakens, which
corresponds $\Li_s/\zz \rightarrow 1$ as $\zz\rightarrow 0$ (for
classical gas, $\Li_s/\zz=1$) in mathematics. For Fermion, the
dimensionless coefficient of $\frac{q_1}{p\sqrt{\Temperature}}$ is
$\frac{\Lihalf{5}}{\Lihalf{3}}$, which increases as the fugacity $\zz$
increasing. So for given $\frac{q_1}{p\sqrt{\Temperature}}$ the
perturbation raised by the heat flux turns stronger as $\zz$
increasing, which results in the hyperbolicity region shrinks.
Analogously, for Boson, the hyperbolicity region expands as 
$\zz$ increasing due to $\frac{\Lihalf{5}}{\Lihalf{3}}$ decreasing.
Moreover, the hyperbolicity region is symmetric on the heat flux $q_1$
but not symmetric on the stress $\sigma_{11}$, since $q_1$ is an
odd order moment of $f$ with respect to $\xi_1$ while $\sigma_{11}$ is
an even order moment. Precisely, if we mirror the distribution
function $f$ on the $\xi_1$ direction, i.e
$f(\xi_1,\xi_2,\xi_3)\rightarrow f(-\xi_1,\xi_2,\xi_3)$, then $q_1$
switches its sign, i.e $q_1\rightarrow -q_1$ and $\sigma_{11}$ keeps
unchanged.
}

\subsection{Hyperbolicity for 3D case}\label{sec:hyperbolic3D}
As pointed out above, the quantum Grad's 13-moment system for 1D
flow is hyperbolic around the thermodynamic equilibrium, but not
hyperbolic for all admissible states. Intuitively, for 3D case, the
model would also be hyperbolic around the equilibrium. However, in
\cite{Grad13toR13}, the authors revealed the opposite fact, and pointed
out that the classical Grad's 13-moment system is not hyperbolic even
around the equilibrium. In the following, we investigate the
hyperbolicity of the quantum Grad's 13-moment system for 3D case.

\add{
\begin{figure}[ht]
    \subfigure[Boson]{ 
        \begin{overpic}[width=.3\textwidth]{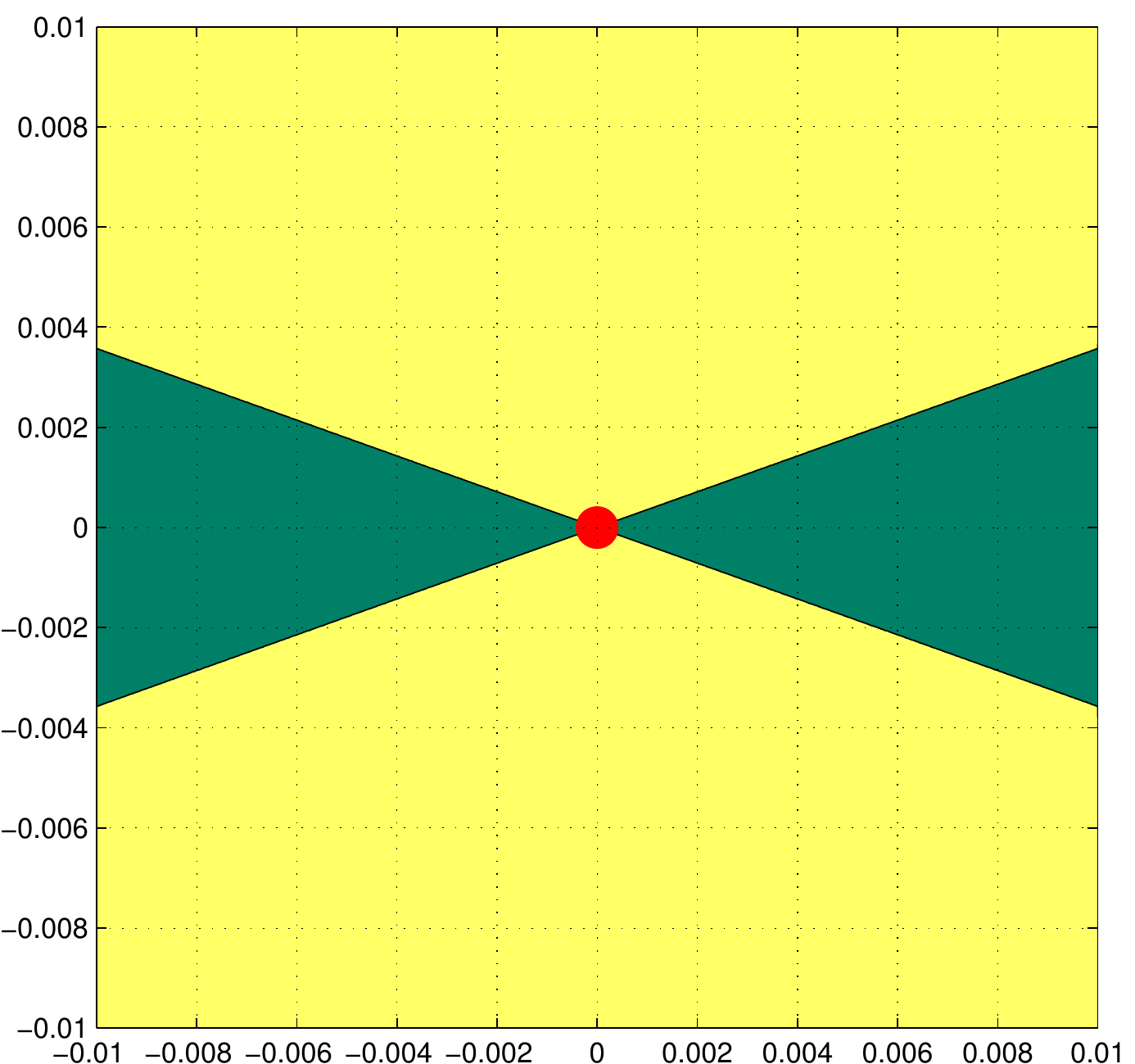}
            \put(76,-5){\tiny$\frac{q_1}{p\sqrt{\Temperature}}$}
            \put(-7,68){\tiny$\frac{\sigma_{12}}{p}$}
        \end{overpic}
    }
    \subfigure[Classical gas]{ 
        \begin{overpic}[width=.3\textwidth]{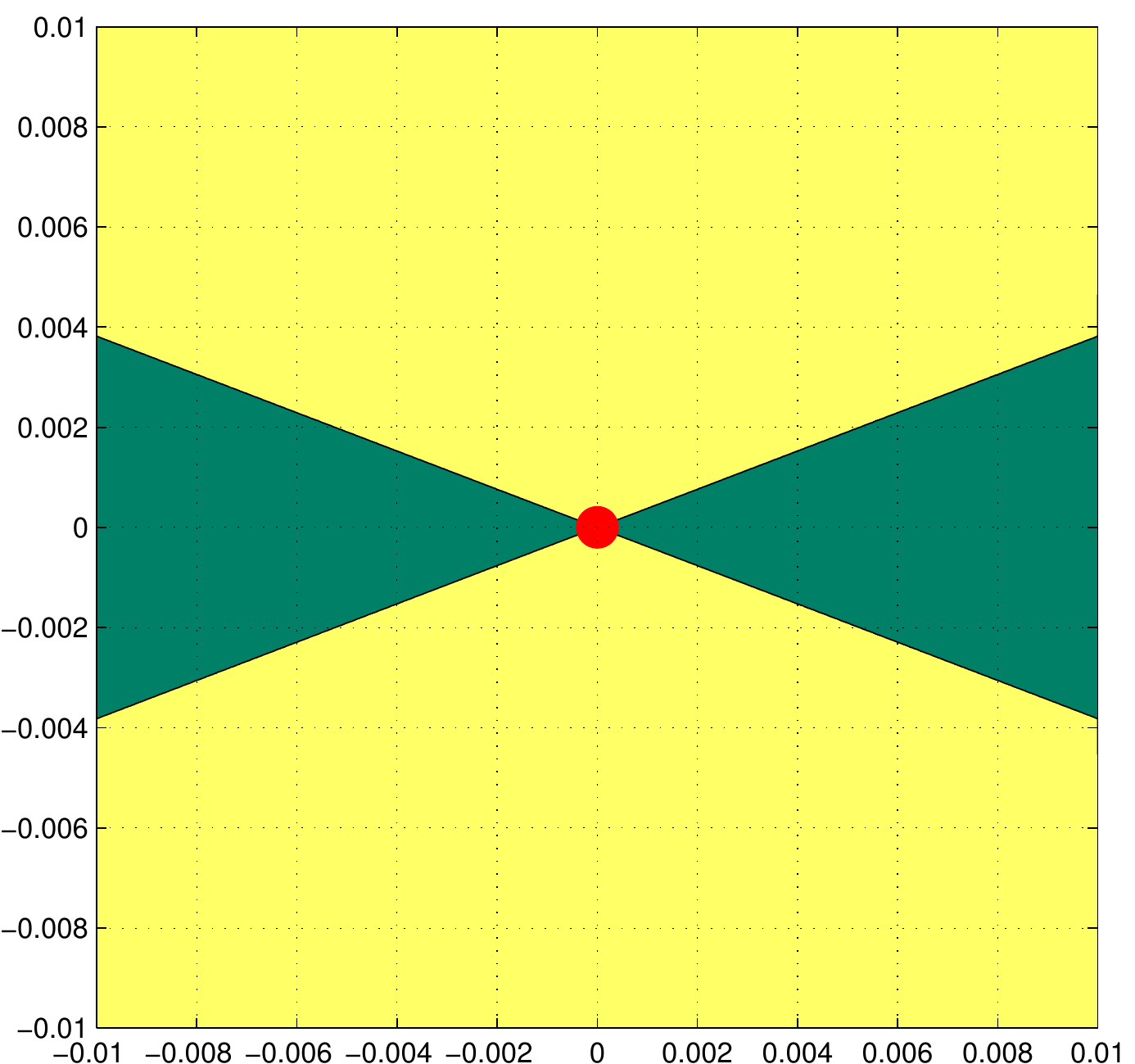}
            \put(76,-5){\tiny$\frac{q_1}{p\sqrt{\Temperature}}$}
            \put(-7,68){\tiny$\frac{\sigma_{12}}{p}$}
        \end{overpic}
    }
    \subfigure[Fermion]{ 
        \begin{overpic}[width=.3\textwidth]{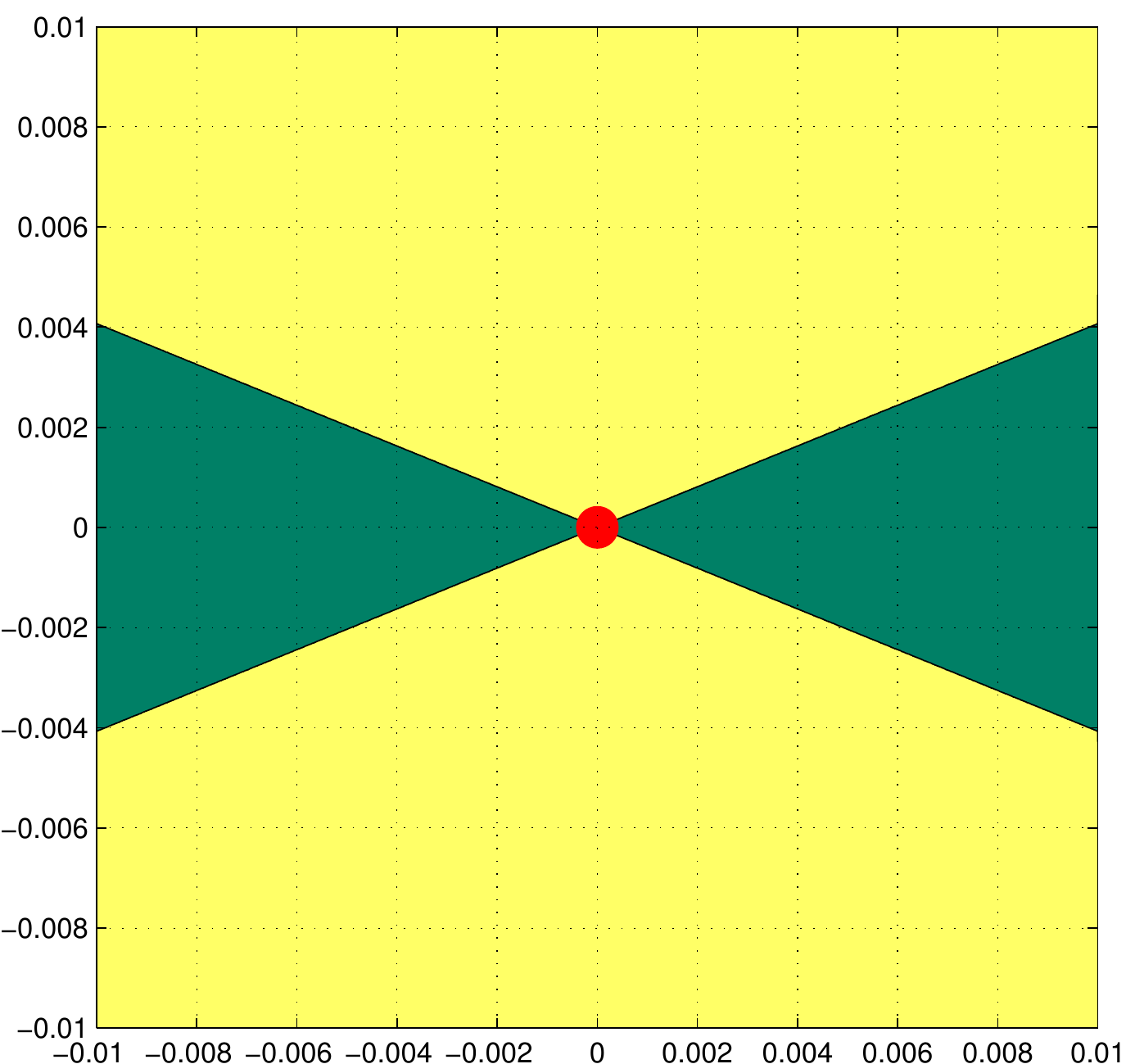}
            \put(76,-5){\tiny$\frac{q_1}{p\sqrt{\Temperature}}$}
            \put(-7,68){\tiny$\frac{\sigma_{12}}{p}$}
        \end{overpic}
    }
    \caption{\label{fig:hr13}
    A cross section of diagonalizable region of $\bA_1$ in
    \eqref{eq:ms13} with $\zz=0.5$ for $\theta=-1$ (Boson), $0$
    (Classical gas) and $1$ (Fermion).
    \add{The blue part and the yellow part denote the diagonalizable region
    and the non-diagonalizable region, respectively, and the red point
corresponds to the equilibrium.}}
\end{figure}
}

Let $\bw=(\rho,u_1,u_2,u_3,p_{11},p_{12},p_{13},p_{22},p_{23},p_{33},
q_1,q_2,q_3)^T$, then the quantum Grad's 13-moment system can be written
as 
\begin{equation}\label{eq:ms13}
    \pd{\bw}{t}+\bA_d\pd{\bw}{x_d}=\boldsymbol{Q},
\end{equation}
where $\boldsymbol{Q}=(0,0,0,0,Q_{11}^{(2)},Q_{12}^{(2)},Q_{13}^{(2)},
Q_{22}^{(2)},Q_{23}^{(2)},Q_{33}^{(2)},Q_1^{(3)},Q_2^{(3)},Q_3^{(3)})^T$,
and $\bA_d$ can be obtained directly from \eqref{eq:EulerEqs} and
\eqref{eq:sigmaq}.

If $\bw$ represents the thermodynamic equilibrium, i.e. 
\[
    p_{ij}=p\delta_{ij}, ~ q_i=0,\quad i,j=1,2,3,
\]
\revised{using the same technique as that in
\mbox{\cite{Grad13toR13}}, we can prove $\bA_d$ are diagonalizable.} 
{then the characteristic polynomial of $\bA_1$ can be directly
calculated as
\begin{equation}\label{eq:cp_13eq}
    |\lambda
    \identity-\bA_1|=(\Temperature)^{13/2}\hat{\lambda}^5\left(
    \hat{\lambda}^2-\frac{7\Lihalf{9}}{5\Lihalf{7}} \right)^2
    \left( \hat{\lambda}^4-c_1\hat{\lambda}^2+ c_0\right),
\end{equation}
where $\hat{\lambda}=(\lambda-u_1)/\sqrt{\Temperature}$, and $c_0$ and
$c_1$ and quantities depending on $\zz$, and satisfying $c_0>0$ and
$c_1>4c_0$ (see Appendix \ref{app:coe} for their concrete forms). 
Hence, in this case, the matrix $\bA_1$ has $13$ real eigenvalues
$\lambda_i=u_1+\sqrt{\Temperature}\hat{\lambda}_i$, $i=1,\cdots,13$
with
\begin{equation}\label{eq:eigenvaluesG13}
    \begin{aligned}
        \hat{\lambda}_{1,13}&=\pm\sqrt{\frac{c_1+\sqrt{c_1^2-4c_0}}{2}},\qquad
        \hat{\lambda}_{2,3,11,12}=\pm\sqrt{\frac{7\Lihalf{9}}{5\Lihalf{7}}},\\
        \hat{\lambda}_{4,10}&=\pm\sqrt{\frac{c_1-\sqrt{c_1^2-4c_0}}{2}},\qquad
        \hat{\lambda}_{5,6,7,8,9}=0.
    \end{aligned}
\end{equation}
Furthermore, we point out that $\bA_1$ and any linear combination of
$\bA_d$ are diagonalizable. The proof will be given in Sect.
\ref{sec:hyperbolicityproof} and Sect. \ref{sec:discussion}.}  Next,
we will show that if $\bw$ is perturbed around the equilibrium,
$\bA_1$ may be not diagonalizable anymore.  Consider the case
\[
    p_{11}=p_{22}=p_{33}=p, \quad 
    p_{13}=p_{23}=0,\quad 
    q_1=q_2=q_3=0,\quad 
    p_{12}=\epsilon p.
\]
The thermodynamic equilibrium corresponds to $\epsilon=0$. The
characteristic polynomial of $\bA_1$ is 
\begin{equation}
    |\lambda\identity-\bA_1|=\frac{\Temperature^{13/2}}{125}
    \hat{\lambda}^3\left( 5\hat{\lambda}^2-7\frac{\Lihalf{7}}{\Lihalf{5}} \right)
    g(\hat{\lambda}^2),
\end{equation}
where $\hat{\lambda}=(\lambda-u_1)/\sqrt{\Temperature}$, and 
\begin{equation}\label{eq:def_g}
    \begin{aligned}
        g(x)=25x^4&+c_4x^3+c_3x^2+c_2x\\
        &-14\epsilon^2\frac{\Lihalf{5}^2
        \left( 
        14\Lihalf{3}\Lihalf{7}\Lihalf{9}^2+35\Lihalf{5}^2\Lihalf{9}^2
        -80\Lihalf{5}\Lihalf{7}^2\Lihalf{9}+35\Lihalf{7}^4\right)}
        {\Lihalf{3}\Lihalf{7}^3\left( 5\Lihalf{1}\Lihalf{5}-3\Lihalf{3}^2
    \right)},
    \end{aligned}
\end{equation}
where $c_2,c_3,c_4$ are some quantities depending on $\zz$ \add{and
$\epsilon$ (see Appendix \ref{app:coe} for its concrete forms)}. If
$\epsilon\neq0$, then the constant term of $g(x)$ 
\[
    -14\epsilon^2\frac{\Lihalf{5}^2
    \left( 
    14\Lihalf{3}\Lihalf{7}\Lihalf{9}^2+35\Lihalf{5}^2\Lihalf{9}^2
    -80\Lihalf{5}\Lihalf{7}^2\Lihalf{9}+35\Lihalf{7}^4\right)}
    {\Lihalf{3}\Lihalf{7}^3\left( 5\Lihalf{1}\Lihalf{5}-3\Lihalf{3}^2 \right)}<0.
\]
So $g(x)$ has at least one negative root, which indicates $\bA_1$ has
at least two complex eigenvalues, i.e. $\bA_1$ is not real
diagonalizable. Consequently, for arbitrarily small $\epsilon\neq0$, the system
\eqref{eq:ms13} is not hyperbolic. The thermodynamic equilibrium is
not an interior point of the hyperbolicity region of the quantum
Grad's 13-moment system. Whether $\bA_1$ is diagonalizable depends on
several variables, and we cut a cross section of the region crossing
the thermodynamic equilibrium. For a given $\zz$, set
$\sigma_{11}=\sigma_{22}=\sigma_{33}=\sigma_{13}=\sigma_{23}=0$,
$q_2=q_3=0$, and plot the cross sections in Fig. \ref{fig:hr13}.
\add{It is reminded that since $\sigma_{12}$ is
an odd order moment of $f$ with respect to $\xi_1$, the diagonalizable
regions are symmetric on both $q_1$ and $\sigma_{12}$.}
One can observe that the equilibrium is just on the boundary of the
real diagonalizable region of $\bA_1$. Therefore, the quantum Grad's
13-moment system is not hyperbolic in any neighborhood of the
equilibrium. Without the hyperbolicity in a neighborhood of the
equilibrium, the well-posedness of the quantum Grad's 13-moment system
is lost even if the phase density is extremely close to the
equilibrium. Although in \cite{QuantumGrad13}, the author claimed that
the quantum Grad's 13-moment system possesses a lot of good
properties, this fatal drawback tremendously limits the practical
application of the system.

\section{Regularization of the 13-Moment system}
\label{sec:regularization}
In this section, we try to regularize the quantum Grad's 13-moment
system for the QBE to a globally hyperbolic model.
\add{Hyperbolicity is a critical condition for the existence and
stability of the solution of a model. However, it is not enough to
guarantee the model is a good approximation of the QBE. Here
we concern two criteria: 
\begin{enumerate}
    \item the linearized equations of the regularized system are same
        as those of the quantum Grad's 13-moment system; 
    \item contains the correct NSF law. 
\end{enumerate}
Since the equilibrium plays an
essential role in Grad's expansion, it is expected that at the
equilibrium, the regularized system shares the same wave structures,
i.e. the coefficients matrix has the eigenvalues
\eqref{eq:eigenvaluesG13} and possesses same eigenvectors as those of
Grad's 13-moment system, which is equivalent to the first criterion.
The second criterion is natural because Grad's moment system contains
the NSF law.

In \cite{framework, Fan2015}, the authors proposed a framework on
moment model reduction from kinetic equations to globally hyperbolic
hydrodynamic system, by restudied their previous work \cite{Fan,
Fan_new}. A natural way is to directly apply the framework on
the quantum Grad's 13-moment system to obtain a globally hyperbolic
system. However, the resulting system fails to satisfy the two
criteria. Given the important role of the equilibrium in Grad's
expansion, we split the expansion into the equilibrium part and the
non-equilibrium part, and only apply the framework on the
non-equilibrium. The resulting system is globally hyperbolic and also
consistent with the two criteria. Further study shows the essential of the
regularization is decoupling the derivative operator and the
multiplying velocity operator.}
\delete{using the framework on the model reduction for generic kinetic
equations in \mbox{\cite{framework, Fan2015}}.  It should be noted that the
framework in \mbox{\cite{framework, Fan2015}} can not be applied to the
quantum Boltzmann equation directly.  The equilibrium part and the
non-equilibrium part will be dealt with in different ways below.}

\subsection{Revisit of Grad's 13-moment system}
\add{
Let us examine the quantum Grad's 13-moment system in further details
at first. Given $\zz$, $\bu$ and $\Temperature$, we have a weight
function denoted as
\begin{equation}
    \weightzut(\bv) =\frac{1}{\zz^{-1}\exp\left(
    \frac{|\bv-\bu|^2}{2\Temperature} \right)+\theta}.
\end{equation}
Then we denote $\bbH$ be the weighted $L^2$ function space with the
inner product
\begin{equation}\label{eq:innerproduct}
    \langle g, h\rangle^{[\zz,\bu,\Temperature]} = \frac{m}{\hat{h}^3}\int_{\bbR^3}\frac{gh}{\weightzut}\dd\bv.
\end{equation}
Let us consider the distribution functions satisfying \eqref{eq:QBE}
in the Hilbert space $\bbH$.  Define a set of basis functions
\[
    \begin{aligned}
        \phi^{(0)}&=\frac{1}{\rho}\weightzut(\bv),
        & \phi_{ij}^{(2)}&= \weightzut
        \frac{1}{2p}\frac{\Lihalf{5}}{\Lihalf{7}}
        \left( \frac{c_ic_j}{\Temperature}-\delta_{ij}\frac{\Lihalf{5}}{\Lihalf{3}} \right),\\
        \phi_i^{(1)}&=\weightzut(\bv)\frac{c_i}{p},
        &
        \phi_i^{(3)}&= \weightzut
        \frac{\mathfrak{B}^{-1}}{5p\Temperature}c_i\left(
        \frac{|\bc|^2}{\Temperature}-5\frac{\Lihalf{7}}{\Lihalf{5}}\right),
    \end{aligned}
\]
where $c_i=v_i-u_i$,
$\rho=\constantsymbol\sqrt{2\pi\Temperature}^3\Lihalf{3}$,
$p=\constantsymbol\sqrt{2\pi\Temperature}^3\Temperature\Lihalf{5}$,
and
$\mathfrak{B}=7/2(\Lihalf{9}/\Lihalf{5})-5/2(\Lihalf{7}^2/\Lihalf{5}^2)$.
Let the sub-space of $\bbH$, spanned by $\phi^{(0)}$, $\phi^{(1)}_i$,
$\phi^{(2)}_{ij}$, and $\phi^{(3)}_i$, to be denoted as
\begin{equation}
    \bbHzut=\rmspan\left\{\phi^{(0)}, \phi^{(1)}_i, \phi^{(2)}_{ij},
    \phi^{(3)}_i\right\}.
\end{equation}
Since the basis functions are product of the weight function and
polynomials, we have
\begin{equation}
    \bbHzut=\rmspan\left\{\weightzut,\weightzut v_i, \weightzut v_iv_j,
    \weightzut |\bv|^2v_i \right\}.
\end{equation}
Let $\mP$ to be the natural projection from $\bbH$ to $\bbHzut$
defined as
\begin{equation}
    \begin{array}{cccl}
        & \bbH & \rightarrow & \bbHzut\\
        \mP: & g & \rightarrow & \mG =
        a^{(0)}\phi^{(0)}+a^{(1)}_i\phi^{(1)}_i
        +a^{(2)}_{ij}\phi^{(2)}_{ij}
        +a^{(3)}_i\phi^{(3)}_i
    \end{array},
\end{equation}
where the coefficients $a^{(0)}$, $a^{(1)}_i$, $a^{(2)}_{ij}$ and
$a^{(3)}_i$ are defined as
\begin{equation}
    \begin{aligned}
        a^{(0)}&=\rho\langle g\phi^{(0)}\rangle^{[\zz,\bu,\Temperature]},&
        a^{(1)}_i&=p\langle g\phi^{(1)}_i\rangle^{[\zz,\bu,\Temperature]},\\
        a^{(2)}_{ij}&=\frac{2p\Lihalf{7}}{\Lihalf{5}}\langle
        g\phi^{(2)}_{ij}\rangle^{[\zz,\bu,\Temperature]} - b\delta_{ij},&
        a^{(3)}_i&=5p(\Temperature)^2\mathfrak{B}\langle g\phi^{(3)}_i\rangle^{[\zz,\bu,\Temperature]},\\
    \end{aligned}
\end{equation}
where
$b=(1-\mathfrak{b}) \constantsymbol \displaystyle \int_{\bbR^3}g\left(
  \frac{1}{3}|\bc|^2-\frac{\Lihalf{5}}{\Lihalf{3}}\Temperature\right)\dd\bv$,
and $\mathfrak{b}=\frac{2}{5-3\Lihalf{5}^2/(\Lihalf{3}\Lihalf{7})}$.
Obviously, the projection $\mP$ is an orthogonal projection on $\bbH$
and is identical on $\bbHzut$, i.e., $\mP \mG=\mG$ for
$\forall \mG\in\bbHzut$.}  We recall that $\zz$, $\bu$ and
$\Temperature$ are given parameters, and the projection $\mP$ is
dependent on these parameters.  Particularly, if $\zz$, $\bu$ and
$\Temperature$ are defined as
\begin{equation*}
    \int_{\bbR^3}g\dd\bv=\sqrt{2\pi\Temperature}^3\Lihalf{3},\quad
    \int_{\bbR^3}gv_i\dd\bv=\sqrt{2\pi\Temperature}^3\Lihalf{3}u_i,\quad
    \int_{\bbR^3}g|\bc|^2\dd\bv=3\sqrt{2\pi\Temperature}^3\Temperature\Lihalf{5},
\end{equation*}
and denoted by $\zz=\zz(g)$, $\bu=\bu(g)$ and
$\Temperature=\Temperature(g)$, then we have 
\begin{equation}
    a^{(0)} = \rho,\quad 
    a^{(1)}_i = 0,\quad
    a^{(2)}_{ij} =\sigma_{ij},\quad
    a^{(3)}_i =q_i.
\end{equation}
Then for a distribution $f\in\bbH$, if $\zz=\zz(f)$, $\bu=\bu(f)$ and
$\Temperature=\Temperature(f)$, $\mP f$ is Grad's expansion
\eqref{eq:Grad13expansion1}, \add{i.e $f_{G13} = \mP f$}. Therefore,
we need only to study the distribution in the space $\bbHzut$ in the
following.

\add{
Actually, Grad's moment system is derived by
\begin{equation}\label{eq:deriveGrad13}
    \frac{m}{\hat{h}^3}\int_{\bbR^3}\psi\pd{f_{G13}}{t}\dd\bv 
    +\frac{m}{\hat{h}^3}\int_{\bbR^3}\psi v_d\pd{f_{G13}}{x_d}\dd\bv
    =\frac{m}{\hat{h}^3}\int_{\bbR^3}\psi Q(f_{G13},f_{G13})\dd\bv,
\end{equation}
where $\psi=1,v_i,v_iv_j,v_i|\bv|^2$.
Noticing $f_{G13} = \mP f$, Eq. \eqref{eq:deriveGrad13} is equivalent
to
\begin{equation}
    \left\langle\phi, \pd{\mP f}{t}\right\rangle^{[\zz,\bu,\Temperature]}
    +\left\langle\phi, v_d\pd{\mP f}{x_d}\right\rangle^{[\zz,\bu,\Temperature]}
    =\left\langle\phi, Q(\mP f, \mP f)\right\rangle^{[\zz,\bu,\Temperature]},
    \qquad \phi\in\bbHzut.
\end{equation}
Since $\mP$ is orthogonal, Grad's 13-moment system is
essentially equivalent to
\begin{equation}\label{eq:BoltzmannmP}
    \mP\pd{\mP f}{t}+\mP v_d\pd{\mP f}{x_d} = \mP Q(\mP f, \mP f).
\end{equation}
For the time derivative part, direct calculation yields
\begin{equation}\label{eq:timedeverivate}
    \begin{aligned}
        \mP\pd{\mP f}{t} &=
        \pd{\rho}{t}\phi^{(0)} +\rho\pd{u_i}{t}\phi^{(1)}_i
        +\left( 
        \pd{\sigma_{ij}}{t}+\delta_{ij}\mathfrak{b}\left(
        \pd{p}{t}-\frac{p}{\rho}\pd{\rho}{t} \right)
        \right)\phi_{ij}^{(2)} \\
        &\quad+
        \left( \pd{q_i}{t}+\sigma_{ij}\pd{u_j}{t}+\frac{5}{2}\left(
        p-\rho\Temperature\frac{\Lihalf{7}}{\Lihalf{5}} \right)
        \pd{u_i}{t} \right)\phi_i^{(3)}.
    \end{aligned}
\end{equation}
}
We remark that the upper equation is also valid for the derivatives of
$x_d$. Next, we calculate the convection term:
\begin{equation}
    \begin{aligned}
       \mP v_d\pd{\mP f}{x_d} &=\mP(u_d+c_d)\pd{\mP f}{x_d}
       =u_d\mP\pd{\mP f}{x_d} + \mP c_d\pd{\mP f}{x_d}.
    \end{aligned}
\end{equation}
Direct calculation and simplification yield
\begin{equation}\label{eq:convectionterm}
    \begin{aligned}
        \mP v_d\pd{\mP f}{x_d} &= u_d\mP\pd{\mP f}{x_d}
        +\rho\pd{u_d}{x_d}\phi^{(0)}+\pd{p_{id}}{x_d}\phi_i^{(1)}\\
        &\quad{\tiny
            +\left( 
            \sigma_{ij}\pd{u_d}{x_d}+2p_{d\langle j}\pd{u_{i\rangle}}{x_d}
            +\frac{4}{5}\pd{q_{\langle i}}{x_{j\rangle}}
            \pd{\sigma_{ij}}{t}
            +\frac{2\delta_{ij}\mathfrak{b}}{3}
            \left( p_{id}\pd{u_i}{x_d}+\pd{q_d}{x_d} \right) \right)
            \phi_{ij}^{(2)}
        }\\
        &\quad+
        \left( 
        \frac{7}{5}q_i\pd{u_d}{x_d}+\frac{7}{5}q_d\pd{u_i}{x_d}+\frac{2}{5}q_k\pd{u_k}{x_i}
        +\frac{1}{2}\pd{\Delta_{id}}{x_d}
        -\frac{5}{2}\Temperature\frac{\Lihalf{7}}{\Lihalf{5}}\pd{p_{id}}{x_d}
        \right)
        \phi_i^{(3)}.
    \end{aligned}
\end{equation}
Because of \eqref{eq:collisioninvariants} and the definitions of
$Q_{ij}^{(2)}$ and $Q_i^{(3)}$, the collision part can be written as
\begin{equation}\label{eq:collisionterm}
    \mP Q(f,f)=
    Q_{ij}^{(2)}\phi_{ij}^{(2)}+Q_i^{(3)}\phi_i^{(3)}
\end{equation}
and $Q_{ii}^{(2)}=0$.

Collecting \eqref{eq:timedeverivate}, \eqref{eq:convectionterm},
\eqref{eq:collisionterm}, and matching the coefficients of
$\phi^{(0)}$, $\phi_i^{(1)}$, $\phi_{ij}^{(2)}$ and $\phi_i^{(3)}$, we
obtain the moment system
\begin{subequations}\label{eq:BFGrad13}
    \begin{align}
        \opd{\rho}{t}&+\rho\pd{u_d}{x_d}=0,\\
        \rho\opd{u_i}{t}&+\pd{p_{id}}{x_d}=0,~i=1,2,3,\\
        \begin{split}
            \opd{\sigma_{ij}}{t}&+\sigma_{ij}\pd{u_d}{x_d}+2p_{d\langle
            j}\pd{u_{i\rangle}}{x_d}
            +\frac{4}{5}\pd{q_{\langle i}}{x_{j\rangle}}\\
            &+\delta_{ij}\mathfrak{b}
            \left(  \opd{p}{t}-\frac{p}{\rho}\opd{\rho}{t} 
            + \frac{2}{3} \left( p_{id}\pd{u_i}{x_d}+\pd{q_d}{x_d} \right)
            \right)
            =Q_{ij}^{(2)},
        \end{split}\\
        \begin{split}\label{eq:BFGrad13-q}
            \opd{q_i}{t}&+\frac{5}{2}\left(p-\rho\Temperature\frac{\Lihalf{7}}{\Lihalf{5}}\right)\opd{u_i}{t}+\sigma_{ij}\opd{u_j}{t}
            +\frac{7}{5}q_i\pd{u_d}{x_d}+\frac{7}{5}q_d\pd{u_i}{x_d}+\frac{2}{5}q_k\pd{u_k}{x_i}\\
            &+\frac{1}{2}\pd{\Delta_{id}}{x_d}
            -\frac{5}{2} \Temperature\frac{\Lihalf{7}}{\Lihalf{5}}
            \pd{p_{id}}{x_d}
            =Q_i^{(3)},
        \end{split}
    \end{align}
\end{subequations}
which is exactly the same as the quantum Grad's 13-moment system in
Sect. \ref{sec:Grad13}. 

\add{
The projection $\mP$ is an alternative representation of Grad's expansion
\eqref{eq:Grad13expansion1}, and Eq. \eqref{eq:BoltzmannmP} is an
equivalent form of \eqref{eq:deriveGrad13}. The new formula helps us
understand the relationship between the QBE and Grad's moment system,
and is also useful in the following part of this section.
}

\subsection{A trivial regularized 13-moment system}\label{sec:hme13framewrok}
\add{ In Ref. \cite{Fan2015}, the authors pointed out that the
  underlying reason due to the loss of hyperbolicity of Grad's moment
  system is that Grad's moment method treats the time derivative and
  space derivatives in different ways. Precisely, we may see this
  point by comparison of the QBE \eqref{eq:QBE} and
  \eqref{eq:BoltzmannmP} in details. In \eqref{eq:BoltzmannmP},
  $\mP \pd{}{t}$ is the corresponding time derivative operator
  $\pd{}{t}$ in \eqref{eq:QBE}, and $\mP v_d \pd{}{x_d}$ is derived
  from the convection operator $v_d\pd{}{x_d}$ in
  \eqref{eq:QBE}. Notice that $v_d\pd{}{x_d}$ can be split into the
  multiplication by velocity and the spatial derivative operator,
  while the term $\mP v_d \pd{}{x_d}$ has no similar splitting
  anymore. Hence, in \eqref{eq:BoltzmannmP}, the time and space
  derivatives are treated in different ways. The authors of
  \cite{Fan2015} suggest split the convection operator
  $\mP v_d\pd{}{x_d}$ as $\mP v_d\mP\pd{}{x_d}$, which indicates the
  equivalent equation of the moment system turns to
\begin{equation}\label{eq:BoltzmannmPP}
    \mP\pd{\mP f}{t}+\mP v_d\mP\pd{\mP f}{x_d} = \mP Q(\mP f, \mP f).
\end{equation}
The corresponding moment system is guaranteed to be globally
hyperbolic by Thm. 1 in \cite{Fan2015} or Thm. 4.1 in
\cite{framework}.
}

Since \eqref{eq:timedeverivate} is also valid for space derivatives,
direct calculation yields
\begin{small}
\begin{equation}\label{eq:convectionterm_regularize}
    \begin{aligned}
        \mP v_d\mP\pd{\mP f}{x_d} &= u_d\mP\pd{\mP f}{x_d}
        +\rho\pd{u_d}{x_d}\phi^{(0)}+\pd{p_{id}}{x_d}\phi_i^{(1)}\\
        &\quad{
            +\left( 
            2p\pd{u_{\langle i}}{x_{l\rangle}}
            +\frac{4}{5}\pd{q_{\langle i}}{x_{j\rangle}}
            +\frac{4}{5}\sigma_{k\langle i}\pd{u_{k}}{x_{j\rangle}}
            + \frac{2\delta_{ij}\mathfrak{b}}{3} \left( p_{id}\pd{u_i}{x_d}+\pd{q_d}{x_d} \right) \right)
            \phi_{ij}^{(2)}
        }\\
        &\quad+
        \left( 
        \left( \frac{7\Lihalf{9}}{2\Lihalf{7}}-\frac{5\Lihalf{7}}{2\Lihalf{5}} \right)
        \pd{\sigma_{id}}{x_d}
        +\frac{5}{2}\left(
        \frac{7\frac{\Lihalf{7}}{\Lihalf{5}}-3\frac{\Lihalf{3}}{\Lihalf{1}}}
        {5-3\frac{\Lihalf{3}^2}{\Lihalf{5}\Lihalf{1}}}-\frac{\Lihalf{7}}{\Lihalf{5}} \right)
        \left( \pd{p}{x_i}-\frac{p}{\rho}\pd{\rho}{x_i} \right)
        \right)\Temperature
        \phi_i^{(3)}.
    \end{aligned}
\end{equation}
\end{small}
Collecting \eqref{eq:timedeverivate}, \eqref{eq:convectionterm_regularize} and
\eqref{eq:collisionterm}, and matching the coefficients of
$\phi^{(0)}$, $\phi_i^{(1)}$, $\phi_{ij}^{(2)}$ and $\phi_i^{(3)}$, we
obtain the regularized 13-moment system:
\begin{subequations}\label{eq:regularized13_framework}
    \begin{align}
        \opd{\rho}{t}&+\rho\pd{u_d}{x_d}=0,\\
        \rho\opd{u_i}{t}&+\pd{p_{id}}{x_d}=0,~i=1,2,3,\\
        \begin{split}
            \opd{\sigma_{ij}}{t}&
            +2p\pd{u_{\langle i}}{x_{j\rangle}}
            +\frac{4}{5}\pd{q_{\langle i}}{x_{j\rangle}}
            +\frac{4}{5}\sigma_{d\langle i}\pd{u_{d}}{x_{j\rangle}}\\
            &+\delta_{ij}\mathfrak{b}
            \left(  \opd{p}{t}-\frac{p}{\rho}\opd{\rho}{t} 
            + \frac{2}{3} \left( p_{kd}\pd{u_k}{x_d}+\pd{q_d}{x_d} \right)
            \right)
            =Q_{ij}^{(2)},
        \end{split}\\
        \begin{split}\label{eq:regularized13_framework-q}
            \opd{q_i}{t}&
            +\frac{5}{2}\left(
            p-\rho\Temperature\frac{\Lihalf{7}}{\Lihalf{5}} \right)\opd{u_i}{t}
            +\sigma_{ij}\opd{u_j}{t} 
            +\Temperature \left( \frac{7\Lihalf{9}}{2\Lihalf{7}}-\frac{5\Lihalf{7}}{2\Lihalf{5}} \right) \pd{\sigma_{id}}{x_d}
            \\
            &+\frac{5}{2}\Temperature\left(
            \frac{7\frac{\Lihalf{7}}{\Lihalf{5}}-3\frac{\Lihalf{3}}{\Lihalf{1}}}
            {5-3\frac{\Lihalf{3}^2}{\Lihalf{5}\Lihalf{1}}}-\frac{\Lihalf{7}}{\Lihalf{5}} \right)
            \left( \pd{p}{x_i}-\frac{p}{\rho}\pd{\rho}{x_i} \right)
            =Q_i^{(3)}.
        \end{split}
    \end{align}
\end{subequations}

\add{
Comparing the quantum Grad's 13-moment system \eqref{eq:BFGrad13} and
the regularized 13-moment system \eqref{eq:regularized13_framework},
one can observe that only the governing equations of $\sigma_{ij}$ and
$q_i$ are different, and the two moment systems share the same
governing equations of $\rho$, $u_i$ and $p$.

Next we check the regularized 13-moment system by the two criteria
proposed at the beginning of this section.
\subsubsection{Linearized equations at the equilibrium}\label{sec:linearized}
The primary idea of Grad's moment method is that assuming the
distribution function is not far from the equilibrium, and
approximating the distribution by polynomials with the equilibrium as
the weight function. Hence, it is expected that at the equilibrium,
the regularized moment system is a high-order approximation of Grad's
moment system. So the linearized equations of the regularized moment
system and Grad's moment system must be same. For a given equilibrium
$\bw^{0}=\bw_{eq}$, assume $\bw = \bw^{0}+\epsilon \hat{\bw}$, where
$\epsilon$ is a small quantity. The linearized equations of the
quasi-linear equations
$\pd{\bw}{t}+\bA(\bw)\pd{\bw}{x}=\boldsymbol{Q}$ is
$\pd{\hat{\bw}}{t}+\bA(\bw^{0})\pd{\hat{\bw}}{x}=\frac{1}{\epsilon}\boldsymbol{Q}$.
Since at the equilibrium, $\sigma_{ij}=0$, $q_i=0$, the linearized
equations of \eqref{eq:BFGrad13-q} is
\begin{equation}\label{eq:linearized-q-Grad}
    \begin{split}
        \opd{\hat{q}_i}{t}&
        +\frac{5}{2}p^0\left(
        1-\frac{\Lihalf{3}\Lihalf{7}}{\Lihalf{5}^2}
        \right)\opd{\hat{u}_i}{t}
            +\left( \frac{7\Lihalf{9}}{2\Lihalf{7}}-\frac{5\Lihalf{7}}{2\Lihalf{5}} \right)
            \Temperature^0\pd{\hat{\sigma}_{id}}{x_d} \\
        &+\frac{5}{2}p^0
        \left( \frac{\Temperature^0}{\zz^0}\left( 1-\frac{\Lihalf{3}\Lihalf{7}}{\Lihalf{5}^2} \right)
        \pd{\hat{\zz}}{x_i}+\frac{\Lihalf{7}}{\Lihalf{5}}\pd{R\hat{T}}{x_i}
        \right)
        =\frac{1}{\epsilon}\bar{Q}_i^{(3)},
        \end{split}
        \tag{\ref{eq:BFGrad13-q}'}
\end{equation}
where $\opd{\cdot}{t}=\pd{\cdot}{t}+u^0_d\pd{\cdot}{x_d}$,
$\Li_s=-\theta\Li_s(-\theta \zz^0)$, and $\bar{Q}_i^{(3)}$ denote the
first order parts and $Q_i^{(3)}$.
The linearized equations of \eqref{eq:regularized13_framework-q} is
\begin{equation}\label{eq:linearized-q-r13}
    \begin{split}
    \opd{\hat{q}_i}{t}
    &+\frac{5}{2}p^0\left(
    1-\frac{\Lihalf{3}\Lihalf{7}}{\Lihalf{5}^2}
    \right)\opd{\hat{u}_i}{t}
    +\left( \frac{7\Lihalf{9}}{2\Lihalf{7}}-\frac{5\Lihalf{7}}{2\Lihalf{5}} \right)
    \Temperature^0\pd{\hat{\sigma}_{id}}{x_d} \\
    &+\frac{5}{2}\Temperature^0\left(
    \frac{7\frac{\Lihalf{7}}{\Lihalf{5}}-3\frac{\Lihalf{3}}{\Lihalf{1}}}
    {5-3\frac{\Lihalf{3}^2}{\Lihalf{5}\Lihalf{1}}}-\frac{\Lihalf{7}}{\Lihalf{5}} \right)
    \left( \pd{\hat{p}}{x_i}-\frac{p^0}{\rho^0}\pd{\hat{\rho}}{x_i} \right)
    =\frac{1}{\epsilon}\bar{Q}_i^{(3)}.
    \end{split}
    \tag{\ref{eq:regularized13_framework-q}'}
\end{equation}
Easy to find the linearized equations \eqref{eq:linearized-q-Grad} and
\eqref{eq:linearized-q-r13} are different. ( It is remarked that the
linearized equations of $\rho$, $u_i$, and $\sigma_{ij}$ of the two
moment system are same.) Hence, the linearized
equations of Grad's 13-moment system and the regularized-13 moment
system are different, which indicates the regularized 13-moment system
\eqref{eq:regularized13_framework} is not a good approximation of the
QBE. 

\subsubsection{NSF law}\label{sec:NSF1}
In Chapmann-Enskog expansion, the equilibrium corresponds to the
$0$-th order expansion\cite{QuantumGrad13}. For the first-order
Chapmann-Enskog expansion, the major term of the regularized 13-moment
system \eqref{eq:regularized13_framework} is expected to be same as
that of the quantum Grad's 13-moment system, which indicates the
regularization only modifies some high-order terms of the quantum
Grad's 13-moment system in the sense of Chapmann-Enskog expansion.

For the QBGK collision, the first step of Maxwellian iteration
\cite{Ikenberry} of the regularized 13-moment system
\eqref{eq:regularized13_framework} yields the Navier-Stokes-Fourier
law as
\begin{equation}
    \begin{aligned}
        \sigma_{ij}^{(1)}&=-2\tau p\pd{u_{\langle i}}{x_{j\rangle}},\\
        q_i^{(1)}&= -\tau \frac{5}{2}p\left(
        1-\frac{\Lihalf{3}\Lihalf{7}}{\Lihalf{5}^2} \right)
        \left(-\frac{1}{\rho}\right)\pd{p}{x_i}
        -\tau\frac{5}{2}
        \Temperature\left(
        \frac{7\frac{\Lihalf{7}}{\Lihalf{5}}-3\frac{\Lihalf{3}}{\Lihalf{1}}}
        {5-3\frac{\Lihalf{3}^2}{\Lihalf{5}\Lihalf{1}}}-\frac{\Lihalf{7}}{\Lihalf{5}} \right)
        \left( \pd{p}{x_i}-\frac{p}{\rho}\pd{\rho}{x_i} \right)\\
        &\neq-\frac{5}{2}\tau p\left(
        \frac{7}{2}\frac{\Lihalf{7}}{\Lihalf{5}}-\frac{5}{2}\frac{\Lihalf{5}}{\Lihalf{3}}
        \right)\pd{\Temperature}{x_i}.
    \end{aligned}
\end{equation}
Therefore, the regularized 13-moment system can not give the correct
Fourier law. In this sense, the regularized 13-moment system is also
not a proper approximation of the QBE.
}

\subsection{Regularized 13-moment system}
\add{
In Sect. \ref{sec:hyperbolicity}, we have pointed out that the quantum
Grad's 13 moment system is not hyperbolic even around the equilibrium.
And in the upper subsection, we proposed a regularized 13-moment
system based on the framework in \cite{framework, Fan2015}. However,
the moment system does not satisfy the two criteria, i.e. it fails to
share the same linearized equation of Grad's 13-moment system and can
not give the correct NSF law. In this subsection, we focus on a new
regularized moment system, which not only is hyperbolic but also
satisfies the two criteria. 

The quantum Grad's 13-moment system is essentially equivalent to
\eqref{eq:BoltzmannmP}, i.e $\mP\pd{\mP f}{t}+\mP v_d\pd{\mP f}{x_d} =
\mP Q(\mP f, \mP f)$, and it is not globally hyperbolic but satisfies
the two criteria. The regularized 13-moment system
\eqref{eq:regularized13_framework} is essentially equivalent to
\eqref{eq:BoltzmannmPP}, i.e $\mP\pd{\mP f}{t}+\mP v_d\mP\pd{\mP
f}{x_d} = \mP Q(\mP f, \mP f)$, and it is globally hyperbolic but does
not satisfy the two criteria.
As is pointed out in Sect. \ref{sec:hme13framewrok}, the equilibrium
plays an essential role in Grad's moment method, and is also essential
in the two criteria.
}
So we split the expansion \eqref{eq:Grad13expansion1} into two parts:
the equilibrium part and the non-equilibrium part
\begin{equation}
    \mP f = f_{eq} + (\mP f - f_{eq}),
\end{equation}
and use the treatment in \eqref{eq:BoltzmannmP} to deal with the
equilibrium part, and the treatment in \eqref{eq:BoltzmannmPP} to deal
with the non-equilibrium part:
\begin{equation}\label{eq:framework_BFR2}
    \mP\pd{f_{eq}}{t}+\mP\xi_d\pd{f_{eq}}{x_d} +
    \mP\pd{\mP f-f_{eq}}{t}+\mP\xi_d\mP\pd{\mP f-f_{eq}}{x_d}=\mP Q(\mP f, \mP f).
\end{equation}

The upper equation is equivalent to
\begin{equation}\label{eq:framework_BFR2-2}
    \mP\pd{\mP f}{t}+\mP\xi_d\pd{f_{eq}}{x_d} +
    \mP\xi_d\mP\pd{\mP f-f_{eq}}{x_d}=\mP Q(\mP f, \mP f).
\end{equation}
Direct calculation yields
\begin{equation}\label{eq:BF_xifeq}
    \begin{aligned}
    \mP\xi_d\pd{f_{eq}}{x_d} &= u_d\mP\pd{f_{eq}}{x_d}
        +\rho\pd{u_d}{x_d}\phi^{(0)}+\pd{p}{x_i}\phi_i^{(1)}\\
        &\qquad
        {\tiny
            +\left( 
            2p\pd{u_{\langle i}}{x_{j\rangle}}
            +\frac{2\delta_{ij}\mathfrak{b}}{3}
             p\pd{u_d}{x_d} \right)
            \phi_{ij}^{(2)}
        }
        + \frac{5p}{2}\pd{\Temperature\frac{\Lihalf{7}}{\Lihalf{5}}}{x_i}
        \phi_i^{(3)},
    \end{aligned}
\end{equation}
and
\begin{equation}\label{eq:BF_xifnoneq}
    \begin{aligned}
    \mP\xi_d\mP\pd{\mP f-f_{eq}}{x_d} &= u_d\mP\pd{\mP f-f_{eq}}{x_d}
        +\pd{\sigma_{id}}{x_d}\phi_i^{(1)}\\
        &\quad{
            +\left( 
            \frac{4}{5}\pd{q_{\langle i}}{x_{j\rangle}}
            +\frac{4}{5}\sigma_{k\langle i}\pd{u_{k}}{x_{j\rangle}}
            + \frac{2\delta_{ij}\mathfrak{b}}{3} \left( \sigma_{id}\pd{u_i}{x_d}+\pd{q_d}{x_d} \right) \right)
            \phi_{ij}^{(2)}
        }\\
        &\quad+ 
        \Temperature \left( \frac{7\Lihalf{9}}{2\Lihalf{7}}-\frac{5\Lihalf{7}}{2\Lihalf{5}} \right) \pd{\sigma_{id}}{x_d}
        \phi_i^{(3)}.
    \end{aligned}
\end{equation}
Collecting \eqref{eq:timedeverivate}, \eqref{eq:BF_xifeq},
\eqref{eq:BF_xifnoneq} and \eqref{eq:collisionterm}, and matching the
coefficients of $\phi^{(0)}$, $\phi_i^{(1)}$, $\phi_{ij}^{(2)}$ and
$\phi_i^{(3)}$, we obtain the new regularized 13-moment system:
\begin{subequations}\label{eq:regularized13_final}
    \begin{align}
        \opd{\rho}{t}&+\rho\pd{u_d}{x_d}=0,\\
        \rho\opd{u_i}{t}&+\pd{p_{id}}{x_d}=0,~i=1,2,3,\\
        \begin{split}
            \opd{\sigma_{ij}}{t}&
            +2p\pd{u_{\langle i}}{x_{j\rangle}}
            +\frac{4}{5}\pd{q_{\langle i}}{x_{j\rangle}}
            +\frac{4}{5}\sigma_{d\langle i}\pd{u_{d}}{x_{j\rangle}}\\
            &+\delta_{ij}\mathfrak{b}
            \left(  \opd{p}{t}-\frac{p}{\rho}\opd{\rho}{t} 
            + \frac{2}{3} \left( p_{kd}\pd{u_k}{x_d}+\pd{q_d}{x_d} \right)
            \right)
            =Q_{ij}^{(2)},
        \end{split}\\
        \begin{split}
            \opd{q_i}{t}&
            +\frac{5}{2}\left(
            p-\rho\Temperature\frac{\Lihalf{7}}{\Lihalf{5}} \right)\opd{u_i}{t}
            +\sigma_{ij}\opd{u_j}{t} 
            +\frac{5}{2}p\pd{\Temperature\frac{\Lihalf{7}}{\Lihalf{5}}}{x_i}\\
            &+\Temperature \left( \frac{7\Lihalf{9}}{2\Lihalf{7}}-\frac{5\Lihalf{7}}{2\Lihalf{5}} \right) \pd{\sigma_{id}}{x_d}
            =Q_i^{(3)}.
        \end{split}
    \end{align}
\end{subequations}

\add{
Similar as that for \eqref{eq:regularized13_framework}, comparing the
quantum Grad's 13-moment system \eqref{eq:BFGrad13} and
the regularized 13-moment system \eqref{eq:regularized13_final},
one can observe that only the governing equations of $\sigma_{ij}$ and
$q_i$ are different, and the two moment systems share the same
governing equations of $\rho$, $u_i$ and $p$.

Next, we check the hyperbolicity and the two criteria of the moment
system \eqref{eq:regularized13_final}.
\subsubsection{Linearized equations at the equilibrium}
\label{sec:linearized2}
Following the method in Sect. \ref{sec:linearized}, we linearize the
quantum Grad's 13-moment system \eqref{eq:BFGrad13} and the
regularized 13-moment system \eqref{eq:regularized13_final}.
Direct calculation yields that both the linearized equation of 
\eqref{eq:BFGrad13} and \eqref{eq:regularized13_final} are
\begin{subequations}
    \begin{align}
        \opd{\hat{\rho}}{t}&+\rho^0\pd{\hat{u}_d}{x_d}=0,\\
        \rho^0\opd{\hat{u}_i}{t}&+\pd{\hat{p}_{id}}{x_d}=0,~i=1,2,3,\\
        \begin{split}
            \opd{\hat{\sigma}_{ij}}{t}&
            +2p^0\pd{\hat{u}_{\langle i}}{x_{j\rangle}}
            +\frac{4}{5}\pd{\hat{q}_{\langle i}}{x_{j\rangle}}\\
            &+\delta_{ij}\mathfrak{b}
            \left(  \opd{\hat{p}}{t}-\frac{p^0}{\rho^0}\opd{\hat{\rho}}{t} 
            + \frac{2}{3} \left( p^0\pd{\hat{u}_d}{x_d}+\pd{\hat{q}_d}{x_d} \right)
            \right)
            =\frac{1}{\epsilon}\bar{Q}_{ij}^{(2)},
        \end{split}\\
        \begin{split}
            \opd{\hat{q}_i}{t}&
            +\frac{5}{2}p^0\left(
            1-\frac{\Lihalf{3}\Lihalf{7}}{\Lihalf{5}^2}
            \right)\opd{\hat{u}_i}{t}
            +\left( \frac{7\Lihalf{9}}{2\Lihalf{7}}-\frac{5\Lihalf{7}}{2\Lihalf{5}} \right)
            \Temperature^0\pd{\hat{\sigma}_{id}}{x_d} \\
            &+\frac{5}{2}p^0
            \left( \frac{\Temperature^0}{\zz^0}\left( 1-\frac{\Lihalf{3}\Lihalf{7}}{\Lihalf{5}^2} \right)
            \pd{\hat{\zz}}{x_i}+\frac{\Lihalf{7}}{\Lihalf{5}}\pd{\hat{\Temperature}}{x_i}
            \right)
                =\frac{1}{\epsilon}\bar{Q}_i^{(3)},
        \end{split}
    \end{align}
\end{subequations}
where $\opd{\cdot}{t}=\pd{\cdot}{t}+u^0_d\pd{\cdot}{x_d}$,
$\Li_s=-\theta\Li_s(-\theta \zz^0)$, and $\bar{Q}_{ij}^{(2)}$ and
$\bar{Q}_i^{(3)}$ denote the first order parts of $Q_{ij}^{(2)}$ and
$Q_i^{(3)}$, respectively.
Hence, the regularized 13-moment system \eqref{eq:regularized13_final}
satisfy the first criterion.

\subsubsection{NSF law}
Similar as that in Sect. \ref{sec:NSF1}, we derive the NSF law from the
regularized 13-moment system \eqref{eq:regularized13_final}.  For the
QBGK collision, the first step of Maxwellian iteration
\cite{Ikenberry} of the regularized 13-moment system
\eqref{eq:regularized13_final} yields the Navier-Stokes-Fourier law as
\begin{equation}
    \begin{aligned}
        \sigma_{ij}^{(1)}&=-2\tau p\pd{u_{\langle i}}{x_{j\rangle}},\\
        q_i^{(1)}&= -\tau \frac{5}{2}\left( p-\rho\Temperature\frac{\Lihalf{7}}{\Lihalf{5}} \right)
        \left(-\frac{1}{\rho}\right)\pd{p}{x_i}
        -\tau\frac{5}{2}p\pd{\Temperature\frac{\Lihalf{7}}{\Lihalf{5}}}{x_i}\\
        &=-\frac{5}{2}\tau p\left(
        \frac{7}{2}\frac{\Lihalf{7}}{\Lihalf{5}}-\frac{5}{2}\frac{\Lihalf{5}}{\Lihalf{3}}
        \right)\pd{\Temperature}{x_i}.
    \end{aligned}
\end{equation}
This is the correct NSF law, same as that of the quantum Grad's
13-moment system \cite{QuantumGrad13}.
}

\subsubsection{Hyperbolicity} \label{sec:hyperbolicityproof}
In this subsection, we study the hyperbolicity of the regularized
13-moment system\eqref{eq:regularized13_final}.  Since
$\sigma_{ij}=p_{ij}-\frac{\delta_{ij}}{3}p_{kk}$ and
$p=\frac{1}{3}p_{kk}$, denote
\[
    \bw=(\rho,u_1,u_2,u_3,p_{11},p_{12},p_{13},p_{22},p_{23},p_{33},
    q_1,q_2,q_3)^T,
\]
then the regularized 13-moment system \eqref{eq:regularized13_final}
can be written as
\begin{equation}\label{eq:DMD}
    \bD\opd{\bw}{t}+\bM_d\bD\pd{\bw}{x_d}=\boldsymbol{Q},
\end{equation}
where $\boldsymbol{Q}$ is the same as that in \eqref{eq:ms13}, and
\begin{small}
\begin{equation}
    \bD = \left( \begin{array}{ccccccccccccc}
        1&0&0&0&0&0&0&0&0&0&0&0&0 \\ 
        0&\rho&0&0&0&0&0&0&0&0&0&0&0\\ 
        0 &0&\rho&0&0&0&0&0&0&0&0&0&0\\ 
        0&0&0&\rho&0&0&0&0&0&0 &0&0&0\\ 
        -\frac{p}{\rho}\mathfrak{b} &0&0&0&
        \frac{\mathfrak{b}+2}{3}&0&0&\frac{\mathfrak{b}-1}{3}&0&\frac{\mathfrak{b}-1}{3}&0&0&0 \\ 
        0&0&0&0&0&1&0&0&0&0&0&0&0\\ 
        0&0 &0&0&0&0&1&0&0&0&0&0&0\\ 
        -\frac{p}{\rho}\mathfrak{b} &0&0&0&
        \frac{\mathfrak{b}-1}{3}&0&0&\frac{\mathfrak{b}+2}{3}&0&\frac{\mathfrak{b}-1}{3}&0&0&0 \\ 
        0&0&0&0&0&0&0&0&1&0&0&0&0 \\ 
        -\frac{p}{\rho}\mathfrak{b} &0&0&0&
        \frac{\mathfrak{b}-1}{3}&0&0&\frac{\mathfrak{b}-2}{3}&0&\frac{\mathfrak{b}+2}{3}&0&0&0 \\ 
        0 & \mathfrak{d}p+\sigma_{11} & \sigma_{12} & \sigma_{13} &
        0 & 0 & 0 & 0 & 0 & 0 & 1 & 0 & 0\\
        0 & \sigma_{12} & \mathfrak{d}p+\sigma_{22} & \sigma_{23} &
        0 & 0 & 0 & 0 & 0 & 0 & 0 & 1 & 0\\
        0 & \sigma_{13} & \sigma_{23} & \mathfrak{d}p+\sigma_{33} &
        0 & 0 & 0 & 0 & 0 & 0 & 0 & 0 & 1\\
    \end{array}
 \right),
\end{equation}
\end{small}
where 
\[
    \mathfrak{d} = \frac{5}{2}\left(
    1-\frac{\Lihalf{3}\Lihalf{7}}{\Lihalf{5}^2}\right).
\]
Matrices $\bM_d$ can be obtained from
\eqref{eq:regularized13_final}, and particularly, $\bM_1$ is
\begin{footnotesize}
    \begin{equation}\label{eq:defbM}
    \left(
    \begin {array}{ccccccccccccc} 
    0&1&0&0&0&0&0&0&0&0&0&0&0 \\ 
    \noalign{\medskip}\frac{p}{\rho}&0&0&0
    &1+m_2&0&0&m_2&0&m_2&0&0&0\\ 
    0&0&0&0&0&1 &0&0&0&0&0&0&0\\ 
    0&0&0&0&0&0&1&0&0&0&0&0&0 \\ 
    0&2m_1&0&0&0&0&0&0&0&0&\frac{2\mathfrak{b}}{3}+\frac{8}{15}&0&0\\ 
    0&0&m_1&0&0&0&0&0&0&0&0&\frac{2}{5}&0 \\ 
    0&0&0&m_1&0&0&0&0&0&0&0&0&\frac{2}{5}\\ 
    0&0&0&0&0&0&0&0&0&0&\frac{2\mathfrak{b}}{3}-\frac{4}{15}&0&0\\ 
    0&0&0&0&0&0&0&0&0&0&0&0 &0\\ 
    0&0&0&0&0&0&0&0&0&0&\frac{2\mathfrak{b}}{3}-\frac{4}{15}&0&0\\ 
    0&0&0&0&m_3+\frac{2}{3}m_1&0&0&m_3-\frac{1}{3}m_1&0&m_3-\frac{1}{3}m_1&0&0&0\\ 
    0&0&0&0&0&m_1&0&0&0&0&0&0&0 \\ 
    0&0&0&0&0&0&m_1&0&0&0&0&0&0
\end{array}
\right),
\end{equation}
\end{footnotesize}
\[
    \begin{aligned}
        m_1 &=\frac{\Lihalf{7}}{\Lihalf{5}}\Temperature,\qquad\qquad\qquad
        m_2 = \frac{1}{2}\left( 1-\frac{\Lihalf{5}^2}{\Lihalf{3}\Lihalf{7}} \right),\\
        m_3&=\frac{5}{6\mathfrak{b}}\frac{\Temperature}{5\frac{\Lihalf{1}}{\Lihalf{3}}-3\frac{\Lihalf{3}}{\Lihalf{5}}}
        \left( 2\frac{\Lihalf{1}\Lihalf{7}}{\Lihalf{3}\Lihalf{5}}
        -3\left(1-\frac{\Lihalf{3}\Lihalf{7}}{\Lihalf{5}^2}\right) \right).
    \end{aligned}
\]
If we rewrite the system \eqref{eq:regularized13_final} into the
quasi-linear form as 
\begin{equation}\label{eq:hme13final}
    \pd{\bw}{t}+\bA_d^{R}\pd{\bw}{x_d}=\boldsymbol{Q},
\end{equation}
then $\bA_d^R=\bD^{-1}\left( \bM_d+u_d\identity \right)\bD$. 
\add{
\begin{theorem}\label{thm:hyperbolic}
    The regularized 13-moment system \eqref{eq:hme13final} is globally
    hyperbolic.
\end{theorem}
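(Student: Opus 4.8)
The plan is to work with the non-conservative form \eqref{eq:DMD} and to exploit its key structural feature: the entire dependence of the system on the non-equilibrium moments $\sigma_{ij},q_i$ sits in the matrix $\bD$, whereas $\bM_d$ depends on the state only through $\zz$ and $\Temperature$. Since $\bA_d^R=\bD^{-1}(\bM_d+u_d\identity)\bD$, once $\bD$ is invertible the spectrum and diagonalizability of any linear combination $\sum_d n_d\bA_d^R$ coincide with those of $\sum_d n_d\bM_d+(\bn\cdot\bu)\identity$. So the first task is to check that $\bD$ is invertible at every admissible state. Reordering the unknowns as $\rho$, then $(p_{11},p_{22},p_{33})$, then $(p_{12},p_{13},p_{23})$, then $(u_1,u_2,u_3)$, then $(q_1,q_2,q_3)$ makes $\bD$ block lower–triangular, and $\det\bD$ equals a power of $\rho$ times the determinant of the small $\mathfrak{b}$–block coupling $p_{11},p_{22},p_{33}$; hence $\det\bD\neq0$ reduces to $\rho>0$ together with $\mathfrak{b}\neq0$, the latter being automatic because $\mathfrak{b}=2/(5-3\Lihalf{5}^2/(\Lihalf{3}\Lihalf{7}))$ is finite and nonzero for every admissible $\zz$ (equivalently $5\Lihalf{3}\Lihalf{7}\neq3\Lihalf{5}^2$), and in fact $\mathfrak{b}>0$.

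Second, by Definition \ref{def:hyperbolicity} it suffices to diagonalize $\sum_d n_d\bA_d^R$ with real spectrum for every unit vector $\bn$, which by the previous remark reduces to $\sum_d n_d\bM_d$. Because the construction of Sect.~\ref{sec:regularization} (the isotropic weight, the inner product \eqref{eq:innerproduct}, the basis functions) is covariant under rotations of the velocity, one has $\sum_d n_d\bM_d=\mathcal{O}_{\bn}\,\bM_1\,\mathcal{O}_{\bn}^{-1}$, where $\mathcal{O}_{\bn}$ is the orthogonal matrix acting as the identity on $\rho$, as a rotation $O$ (with $O\be_1=\bn$) on the vectors $u_i$ and $q_i$, and as its symmetric square on the tensor $p_{ij}$. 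Thus everything reduces to showing that the single matrix $\bM_1$ of \eqref{eq:defbM} is diagonalizable with real eigenvalues for all $\zz,\Temperature$; I would carry out this direction–$\be_1$ analysis here and defer the clean statement of the rotational reduction to Sect.~\ref{sec:discussion}.

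Third, for $\bM_1$ itself: since its entries involve only $\zz$ and $\Temperature$ and not $\sigma_{ij},q_i$, its characteristic polynomial must coincide with the one evaluated at the equilibrium, which by Sect.~\ref{sec:linearized2} equals that of the quantum Grad's 13-moment system — namely \eqref{eq:cp_13eq}, with the $13$ roots \eqref{eq:eigenvaluesG13}. These are real because $\Lihalf{9}/\Lihalf{7}>0$ and because the inequalities of Appendix~\ref{app:coe} make the quartic $\hat\lambda^4-c_1\hat\lambda^2+c_0$ have four distinct nonzero real roots. To produce a full eigenbasis I would read off from \eqref{eq:defbM} the $\bM_1$–invariant decomposition $\bbR^{13}=V_6\oplus V_3\oplus V_3'\oplus V_1$ with $V_6=\rmspan\{\rho,u_1,p_{11},p_{22},p_{33},q_1\}$, $V_3=\rmspan\{u_2,p_{12},q_2\}$, $V_3'=\rmspan\{u_3,p_{13},q_3\}$ and $V_1=\rmspan\{p_{23}\}$: on $V_3$ and $V_3'$ the matrix has three distinct real eigenvalues, hence is diagonalizable; $V_1$ lies in the kernel; and on $V_6$ the four quartic eigenvalues are simple, so the only remaining point is that the double eigenvalue $0$ on $V_6$ has a $2$–dimensional eigenspace, which follows at once because the kernel equations force $u_1=q_1=0$ (using $\mathfrak{b}>0$) and impose two independent linear relations among $\rho,p_{11},p_{22},p_{33}$. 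Collecting these eigenvectors shows $\bM_1$ is real–diagonalizable, and by the first two steps so is every $\sum_d n_d\bA_d^R$, which proves the theorem.

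The part I expect to be the real work is the third step — not the spectrum, which is inherited verbatim from the equilibrium analysis already carried out in Sect.~\ref{sec:hyperbolicity}, but pinning down the invariant decomposition of the $13\times13$ matrix $\bM_1$ and checking that no genuine Jordan block survives at the degenerate eigenvalues $0$ and $\pm\sqrt{\Temperature\,7\Lihalf{9}/(5\Lihalf{7})}$ for all admissible $\zz$. The only other nontrivial ingredient is the polylogarithm fact $5\Lihalf{3}\Lihalf{7}\neq3\Lihalf{5}^2$ (and $\mathfrak{b}>0$), needed both for the invertibility of $\bD$ and for the $V_6$–kernel count, uniformly over bosons and fermions; one could instead try to settle diagonalizability by exhibiting a symmetrizer for \eqref{eq:hme13final}, but the direct spectral route above seems the most economical.
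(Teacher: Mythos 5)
Your proposal is correct, and its overall skeleton (invertibility of $\bD$, Galilean/rotational invariance to reduce everything to the single matrix $\bM_1$, realness and separation of the roots of \eqref{eq:cp_M1}) coincides with the paper's. Where you genuinely diverge is in the decisive step, proving that $\bM_1$ has no Jordan blocks: the paper invokes Lem.~\ref{lem:diagonalizable} and verifies symbolically (with Maple) that the annihilating polynomial $p(\bM_1)$ vanishes, which forces a two-case discussion depending on whether the shear eigenvalue $\pm\sqrt{7\Lihalf{9}\Temperature/(5\Lihalf{7})}$ coincides with a root of $x^2-c_1x+c_0$ (the Fermion case $\zz\approx 11.69$); you instead exploit the permutation block-diagonal structure of $\bM_1$, namely the invariant splitting into the $6\times6$ block on $\{\rho,u_1,p_{11},p_{22},p_{33},q_1\}$, the two $3\times3$ blocks on $\{u_2,p_{12},q_2\}$ and $\{u_3,p_{13},q_3\}$, and the trivial block on $p_{23}$, and diagonalize block by block (distinct eigenvalues on the $3\times3$ blocks, an explicit two-dimensional kernel on the $6\times6$ block matching the algebraic multiplicity of $0$). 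This buys two things: no computer-algebra verification is needed, and the Fermion degeneracy at $\zz\approx 11.69$ requires no separate case, since diagonalizability of a block-diagonal matrix is insensitive to eigenvalue coincidences between different blocks; the eigenspace dimensions simply add up. Both routes rest on the same unproved-but-checkable polylogarithm facts ($c_0>0$, $c_1^2>4c_0$, and the positivity of the denominators defining $\mathfrak{b}$ and $\mathfrak{d}$). Two minor touch-ups to your write-up: the kernel count on the $6\times6$ block needs $\tfrac{2\mathfrak{b}}{3}-\tfrac{4}{15}\neq 0$, i.e.\ $\mathfrak{b}\neq\tfrac25$, not merely $\mathfrak{b}>0$ (this holds since $\mathfrak{b}=2/(5-3\Lihalf{5}^2/(\Lihalf{3}\Lihalf{7}))>2/5$); and $\det\bD=\rho^3\mathfrak{b}$, so your block-triangular reduction of the invertibility of $\bD$ is right, a point the paper leaves implicit when writing $\bA_d^R=\bD^{-1}(\bM_d+u_d\identity)\bD$.
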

We use the technique in \cite{Grad13toR13} to prove the theorem.
Firstly, we present a lemma about matrix and its eigenvalues without
proof.
\begin{lemma}\label{lem:diagonalizable}
    For a square matrix $\bA\in\bbR^{n\times n}$, denote $\lambda_i$,
    $i=1,\cdots,k$, ($k\leq n$) by the all distinct eigenvalues of
    $\bA$, and $p(\bA)=\prod_{i=1}^k(\bA-\lambda_i\identity)$, then
    $\bA$ is real diagonalizable if and only if $\lambda_i$ are all
    real and $p(\bA)=0$.
\end{lemma}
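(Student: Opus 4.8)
The plan is to prove both implications through the theory of the minimal polynomial, invoking the standard criterion that a matrix is diagonalizable over a field exactly when its minimal polynomial factors into distinct linear factors over that field. Throughout, let $\lambda_1,\dots,\lambda_k$ denote all the distinct (a priori complex) eigenvalues of $\bA$, so that $p(x)=\prod_{i=1}^k(x-\lambda_i)$ is the radical (square-free part) of the characteristic polynomial. Since the eigenvalues of a real matrix are closed under complex conjugation, $p(x)$ is always a \emph{real} polynomial and $p(\bA)$ a real matrix, whether or not the $\lambda_i$ happen to be real; this is the observation that makes the statement well posed.

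For the forward implication, I would suppose $\bA$ is real diagonalizable, say $\bA=P\,\mathrm{diag}(\mu_1,\dots,\mu_n)\,P^{-1}$ with $P$ real invertible and each $\mu_j\in\bbR$. Then every eigenvalue is real, so each $\lambda_i\in\bbR$. Moreover $p(\bA)=P\,\mathrm{diag}\big(p(\mu_1),\dots,p(\mu_n)\big)\,P^{-1}$, and since each diagonal entry $\mu_j$ coincides with some $\lambda_i$, the corresponding factor $(\mu_j-\lambda_i)$ vanishes, whence $p(\mu_j)=0$ for every $j$. Therefore $p(\bA)=0$, establishing the two claimed conditions.

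For the converse, assume all $\lambda_i$ are real and $p(\bA)=0$. First I would note that every eigenvalue of $\bA$ is a root of the minimal polynomial $m(x)$, so $p(x)\mid m(x)$; and since $m(x)$ divides every annihilating polynomial while $p(\bA)=0$, we also have $m(x)\mid p(x)$. As both are monic this forces $m(x)=p(x)=\prod_{i=1}^k(x-\lambda_i)$, a product of distinct linear factors with real roots. By the diagonalizability criterion, $\bA$ is therefore diagonalizable over $\bbR$, i.e. real diagonalizable.

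I expect no serious obstacle, since this is textbook linear algebra; the only point requiring care is the interpretation of the hypothesis. The list $\lambda_1,\dots,\lambda_k$ must be permitted to contain complex eigenvalues, as otherwise the requirement ``the $\lambda_i$ are all real'' would be vacuous, and one must check (as above) that $p(\bA)$ is nonetheless real and that $p(\bA)=0$ on its own yields only diagonalizability over $\mathbb{C}$, with reality of the spectrum supplying the upgrade to real diagonalizability. As a self-contained alternative to the minimal-polynomial argument for the converse, one may instead construct the spectral projectors $E_i=\prod_{j\neq i}(\bA-\lambda_j\identity)/(\lambda_i-\lambda_j)$ and verify, using Lagrange interpolation and $p(\bA)=0$, that $\sum_i E_i=\identity$, $E_iE_j=0$ for $i\neq j$, and $\bA E_i=\lambda_i E_i$; this exhibits $\bbR^n$ explicitly as a direct sum of real eigenspaces.
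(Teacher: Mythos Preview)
Your proof is correct; the paper itself states this lemma explicitly ``without proof'' as a standard linear-algebra fact, so there is no approach in the paper to compare against. Your minimal-polynomial argument (and the alternative via Lagrange-interpolated spectral projectors) is exactly the kind of textbook justification the authors had in mind.
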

\begin{proof}[Proof of Thm. \ref{thm:hyperbolic}]
    Since both the QBE and the Hilbert space $\bbHzut$ are Galilean
    invariant, the system \eqref{eq:hme13final} is Galilean invariant.
    We just need to prove $\bA_1^R$ is real diagonalizable, which is
    equivalent to $\bM_1$ is real diagonalizable. Direct calculation
    yields the characteristic polynomial of $\bM_1$ is
    \begin{equation}\label{eq:cp_M1}
        |\lambda \identity-\bM_1|={\lambda}^5\left(
        {\lambda}^2-\frac{7\Lihalf{9}}{5\Lihalf{7}}\Temperature
        \right)^2 \left( {\lambda}^4-c_1\Temperature{\lambda}^2+
        c_0(\Temperature)^2\right),
    \end{equation}
    where $c_0$ and $c_1$ are same as that in \eqref{eq:cp_13eq}.
    Easy to check all the eigenvalues of $\bM_1$ are real and all
    zeros of $\lambda^4-c_1\Temperature\lambda^2+c0(\Temperature)^2$
    are nonzero and distinct (all the eigenvalues of $\bM_1$ will
    be studied numerically in the later of this subsection.)
    \begin{itemize}
        \item Case 1: $\frac{7\Lihalf{9}}{5\Lihalf{7}}$ is not a zero
            of $x^2-c_1x+c0$. Let
            \begin{equation}
                p(\bM_1)=\bM_1\left(
                \bM_1^2-\frac{7\Lihalf{9}}{5\Lihalf{7}}\Temperature \right)
                \left( \bM_1^4-c_1\Temperature\bM_1^2+ c_0(\Temperature)^2\right).
            \end{equation}
            With the help of Maple\footnote{Maple is a trademark of Waterloo
            Maple Inc.}, we can directly check $p(\bM_1)=0$. Using Lem.
            \ref{lem:diagonalizable}, we prove that $\bM_1$ is
            diagonalizable.
        \item Case 2: $\frac{7\Lihalf{9}}{5\Lihalf{7}}$ is a zero
            of $x^2-c_1x+c0$. 
            We note that this case only occur for Fermion with
            $\zz\approx 11.69$.
            Let
            \begin{equation}
                p(\bM_1)=\bM_1
                \left( \bM_1^4-c_1\Temperature\bM_1^2+ c_0(\Temperature)^2\right).
            \end{equation}
            With the help of Maple, we can also check $p(\bM_1)=0$. Using Lem.
            \ref{lem:diagonalizable}, we prove that $\bM_1$ is
            diagonalizable.
    \end{itemize}
    This completes the proof.
\end{proof}

In the proof, we get the characteristic polynomials of the matrix
$\bM_1$ in \eqref{eq:cp_M1}, and all the eigenvalues of the
regularized 13-moment system are given in \eqref{eq:eigenvaluesG13}.
In Figs.  \ref{fig:eigenvaluesBoson} and \ref{fig:eigenvaluesFermi},
we present the positive eigenvalues of $\bM_1$ as the fugacity varies.
For Fermion, due to Pauli exclusion principle, the high-speed
particles increase faster than the low-speed particles as the fugacity
increasing, thus the propagation speed increase. Even though, there
exists an upper bound for the propagation speed for any $\zz$.  It is
worth to remind that for $\zz\approx 11.69$, there is a point of
intersection in the right figure of Fig. \ref{fig:eigenvaluesFermi}.
Analogously, for Boson, more particles are staying on the ground
states that the propagation speeds decrease as the fugacity
decreasing.

\begin{figure}[ht]
    \centering
    \begin{overpic}[width=.5\textwidth]{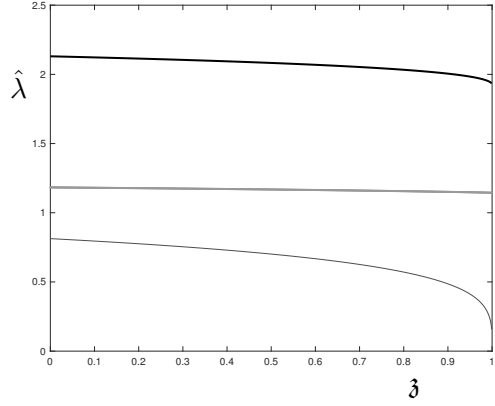}
        \put(76,1){$\zz$}
        \put(6,53){$\hat{\lambda}$}
    \end{overpic}
    \caption{\label{fig:eigenvaluesBoson} Normalized eigenvalues of
$\bM_1$ for Boson.}
\end{figure}
\begin{figure}
    \subfigure{
        \begin{overpic}[width=.5\textwidth]{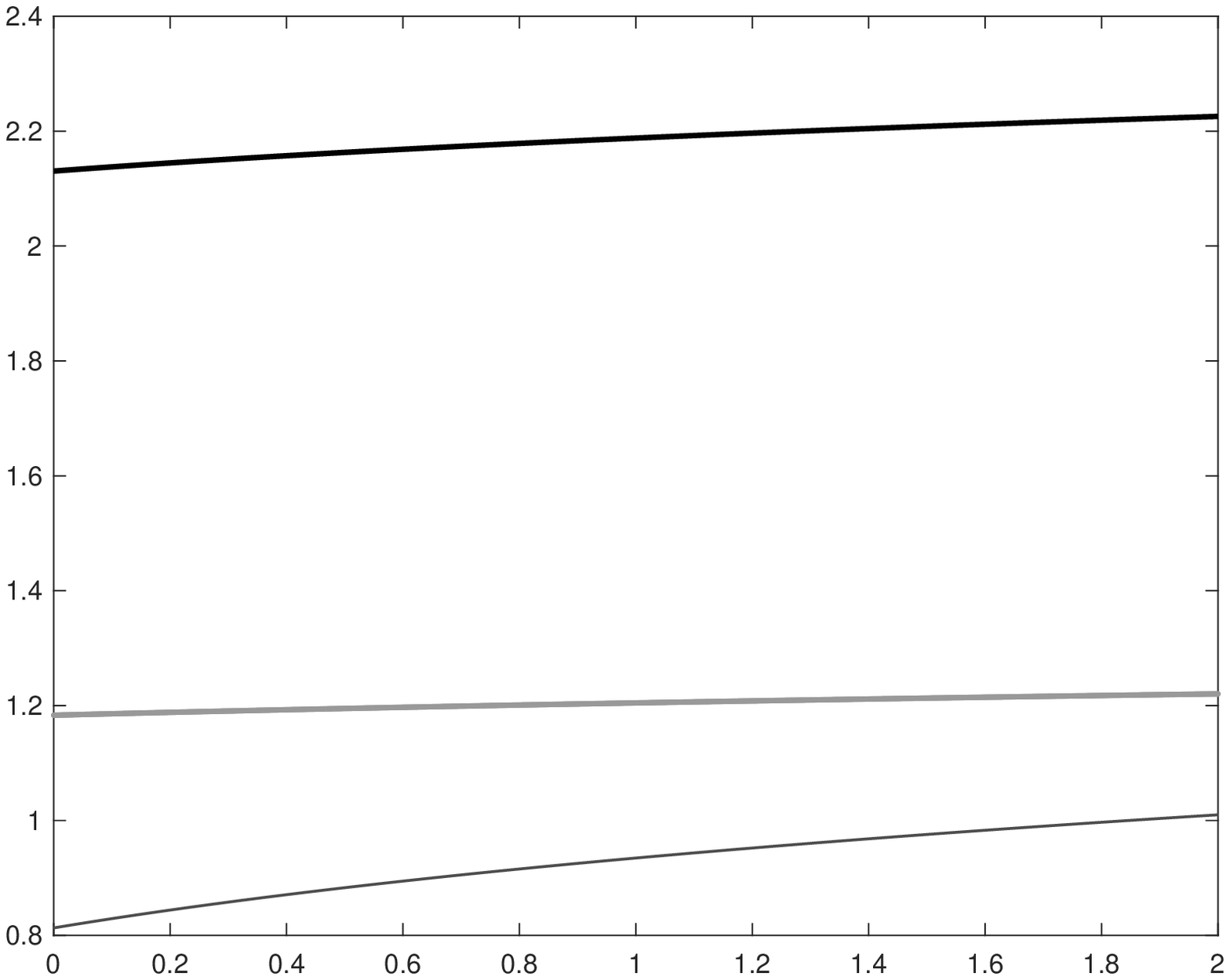}
            \put(76,1){$\zz$}
            \put(6,53){$\hat{\lambda}$}
        \end{overpic}
    }
    \subfigure{
        \begin{overpic}[width=.5\textwidth]{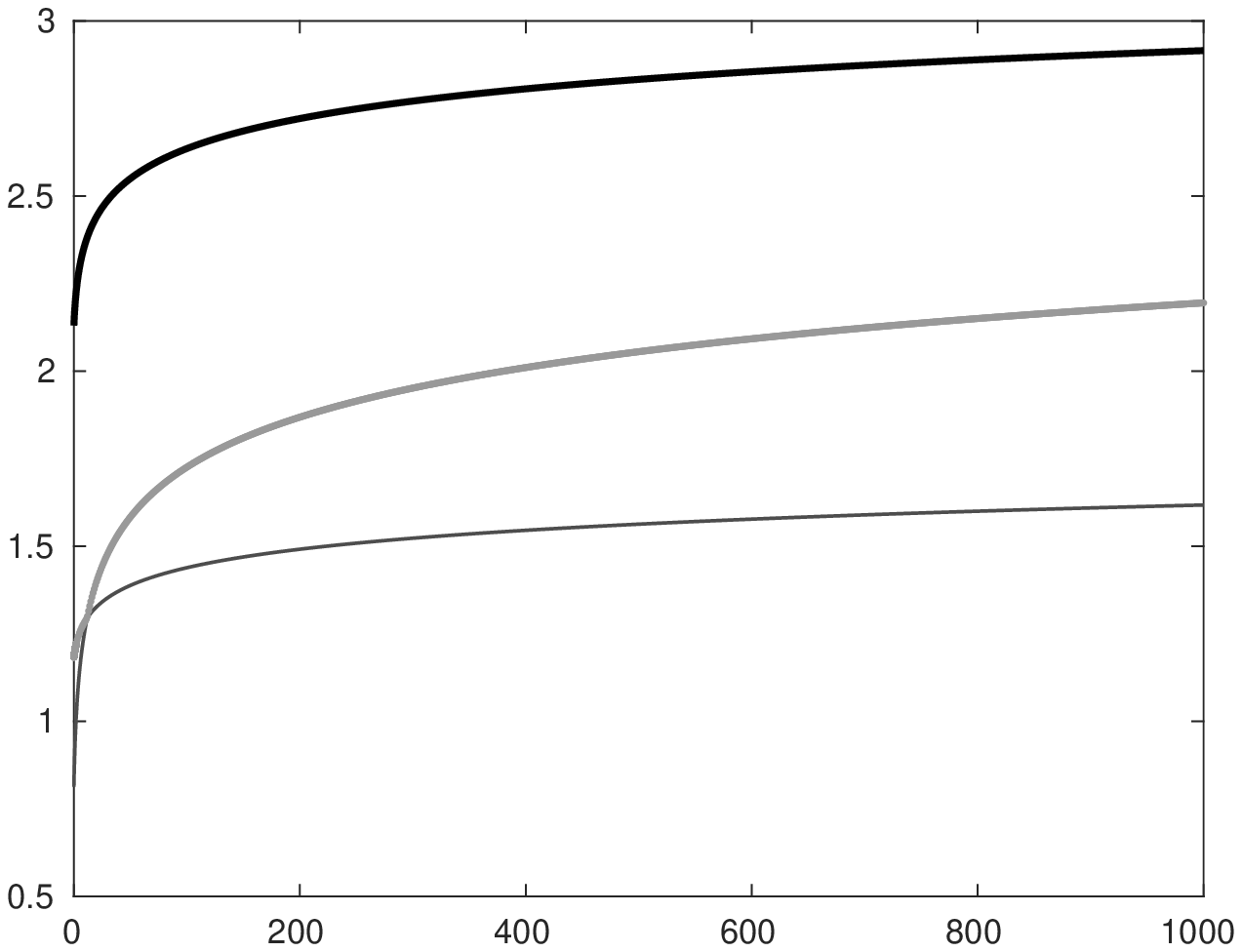}
            \put(76,1){$\zz$}
            \put(6,53){$\hat{\lambda}$}
        \end{overpic}
    }
    \caption{\label{fig:eigenvaluesFermi} Normalized eigenvalues of
$\bM_1$ for Fermion.}
\end{figure}

}

\add{
\subsubsection{Discussion}\label{sec:discussion}
We have verified the regularized 13-moment system
\eqref{eq:regularized13_final} is not only hyperbolic but also
satisfies the two criteria. Here we explore the connotation of the
model.

For the QBE, $\pd{\cdot}{t}$ is the time derivatives operator, and
$v_d\pd{\cdot}{x_d}$ is the convection operator. The convection
operator can be split into the product of the multiplying velocity
operator $v_d\cdot$ and the space derivative operator
$\pd{\cdot}{x_d}$.  For Grad's moment method \eqref{eq:BoltzmannmP},
$\mP \pd{\cdot}{t}$ and $\mP v_d\pd{\cdot}{x_d}$ are the corresponding
time derivative operator and convection operator, respectively. But
the convection operator can not be split into the product of two
operator anymore.
If we rewrite the quantum Grad's 13-moment system \eqref{eq:BFGrad13}
as
\begin{equation}
    \bD\pd{\bw}{t}+\bB_d\pd{\bw}{x_d}=\boldsymbol{Q},
\end{equation}
where $\bB_d=\bD\bA_d-u_d\identity$, then comparing it with
\eqref{eq:QBE} and \eqref{eq:BoltzmannmP}, we can find the
following corresponding relation:
\begin{equation}
    f\longrightarrow \mP f \longleftrightarrow \bw,\qquad
    \pd{\cdot}{t}\longrightarrow\mP\pd{\cdot}{t} \longleftrightarrow \bD\pd{\cdot}{t},\qquad
    v_d\pd{\cdot}{x_d}\longrightarrow\mP v_d\pd{\cdot}{x_d} \longleftrightarrow \bB_d\pd{\cdot}{x_d}.
\end{equation}
In other words, $\bD\pd{\cdot}{t}$ and $\bB_d\pd{\cdot}{x_d}$ are the
corresponding time derivative operator and convection operator,
respectively.

Analogously, for the regularized 13-moment system
\eqref{eq:regularized13_final}, comparing \eqref{eq:DMD} with
\eqref{eq:QBE}, we can find the following corresponding relation:
\begin{equation}
    f \longrightarrow \bw,\qquad
    \pd{\cdot}{s} \longrightarrow \bD\pd{\cdot}{s},\quad s=t,x_d,\qquad
    v_d\cdot \longrightarrow \bM_d,\qquad
    v_d\pd{\cdot}{x_d}\longrightarrow\bM_d\bD\pd{\cdot}{x_d}.
\end{equation}
In this case, the convection operator $\bM_d\bD\pd{\cdot}{x_d}$ can be
split in the product of the multiplying velocity operator $\bM_d$ and
the space derivative operator $\bD\pd{\cdot}{x_d}$, which is similar
as that for the QBE. In this sense, \eqref{eq:BoltzmannmPP} is a
nature approximation of the QBE. And the key point of the regularization
is to split the convection operator into the product of the
multiplying velocity operator and the space derivative operator.

Since $\bM_d$ is a limit of the multiplying velocity operator
$v_d\cdot$ on the Hilbert space $\bbHzut$, the matrix $\bM_d$ is
expected to dependent only on $\zz$, $u_d$ and $\Temperature$, i.e the
equilibrium variables. Actually, it is true due to \eqref{eq:defbM}.
That is to say, $\bM_d(\bw)=\bM_d(\bw_{eq})$, where
$\bw_{eq}=(\rho,u_1,u_2,u_3,p, 0, 0, p, 0, p,0,0,0)$ represent the
equilibrium state. 

In Sect. \ref{sec:linearized2}, we pointed out the linearized equations
of the regularized 13-moment system \eqref{eq:regularized13_final} and
Grad's 13-moment system \eqref{eq:BFGrad13} are same. Hence,
$\bA_d(\bw_{eq})=\bA^R_d(\bw_{eq})$, i.e
$\bB_d(\bw_{eq})=\bM_d(\bw_{eq})\bD(\bw_{eq})$. Let
$\bM^G_d=\bB_d\bD^{-1}$, then $\bM^G_d(\bw)\neq \bM^G_d(\bw_{eq})$,
which indicates $\bM^G_d$ is not a limit of the multiplying velocity
operator $v_d\cdot$ on the Hilbert space $\bbHzut$.
Given $\bM_d(\bw)=\bM_d(\bw_{eq})$, we have
$\bM_d(\bw)=\bM_d^G(\bw_{eq})$. It indicates the matrices $\bM_d$ can
be also calculated by $\bM_d=\bB_d(\bw_{eq})\bD^{-1}(\bw_{eq})$.

Meanwhile, since $\bA_d(\bw_{eq})=\bA^R_d(\bw_{eq})$, Thm.
\ref{thm:hyperbolic} indicates the system \eqref{eq:ms13} is
hyperbolic on the equilibrium. 

On the other hand, the hyperbolicity yields the upper bound on the
propagation speed of the regularized moment system while the QBE
allows infinity propagation speed. Actually, deterministic methods for
the QBE, for example, the discrete velocity method and the spectral
method, also has the upper bound on the propagation speed. How to
choose the upper bound is an important issue for the deterministic
methods. If the upper bound is greater enough, the deterministic methods
work well. For the regularized 13-moment system, the upper bound is
fixed, so the application of the regularized system is limited.
However, on the other hand, we can increase the upper bound by using
more moments. In \cite{Fan_new}, the authors proposed an arbitrary
order regularized moment method, the upper bound of the propagation
speed of which can be as greater enough as the number of the moments
increasing. The regularization proposed in this subsection can be also
extended to derive arbitrary order regularized moment method. Hence,
the upper bound of the propagation speed can be enlarged by using more
moments if necessary.

}


\section{Conclusion} \label{sec:conclusion}

We study the hyperbolicity of the quantum Grad's 13-moment system. It
is found that the model is not hyperbolic even around the
thermodynamic equilibrium. Applying the moment model reduction
framework in \cite{Fan2015} in a trivial way to the quantum
Boltzmann equation, we can not obtain a good model, which is
hyperbolic and satisfies the two criteria. 
By further studying the framework in \cite{Fan2015} and the quantum
Grad's 13-moment system, we propose a regularization for the model and
obtain a globally hyperbolic 13-moment system. We are expecting that
the regularized moment system is helpful to understand the quantum
effects and to develop the moment method in quantum kinetic theory.


\section*{Acknowledge}

Y. Di was supported by the National Natural Science Foundation of
China (Grant No. 11271358). Fan and Li were supported in part by the
National Natural Science Foundation of China (Grant No. 91330205,
11421110001, 11421101 and 11325102).


\add{
\section*{Appendix}
\appendix
\section{Coefficients of \eqref{eq:cp_13eq} and \eqref{eq:def_g}}\label{app:coe}
The coefficients in \eqref{eq:cp_13eq} and \eqref{eq:def_g} can be
directly obtained by calculating the determinant of the matrix
$\bA_1$. Here we list the coefficients in \eqref{eq:cp_13eq} and
\eqref{eq:def_g} as following.
\begin{equation*}
    \begin{aligned}
        c_0 &= \frac{3\left( 7\Lihalf{3}\Lihalf{7}-5\Lihalf{5}^2
    \right)}{5\Lihalf{1}\Lihalf{5}-3\Lihalf{3}^2},\\
    　 c_1 &=\frac{140\Lihalf{1}\Lihalf{5}\Lihalf{9}
    +175\Lihalf{1}\Lihalf{7}^2 - 84\Lihalf{3}^2\Lihalf{9}
    -75\Lihalf{3}\Lihalf{5}\Lihalf{7}}
    {15\Lihalf{7}\left( 5\Lihalf{1}\Lihalf{5}-3\Lihalf{3}^2 \right)},\\
       c_2 &= 
       \frac{1}{\Lihalf{3}^2\Lihalf{7}^3\left(
       5\Lihalf{1}\Lihalf{5}-3\Lihalf{3}^2 \right)}
       \Bigg[ -735{\Lihalf{3}}^{3}{\Lihalf{7}}^{3}\Lihalf{9}
        +560\,{\epsilon}^{2}\Lihalf{1}{\Lihalf{5}}^{2}{\Lihalf{7}}^{2}
        \left( \Lihalf{5}\Lihalf{ {9}}-{\Lihalf{7}}^{2} \right)\\
        &\qquad\qquad+294 {\Lihalf{3}}^{2}{\Lihalf{5}}^{2}\Lihalf{9} 
        \left( {\epsilon}^{2}\Lihalf{5}\Lihalf{9} -
        \left( \frac{17\epsilon^2}{7}-\frac{25}{14} \right)\Lihalf{7}^2
         \right)\\
        &\qquad\qquad+{\epsilon}^{2}\Lihalf{3}{\Lihalf{5}}^{2}\Lihalf{7}\left(
        196\Lihalf{1}{\Lihalf{9}}^{2}
        - {210{\Lihalf{5}}^{2}\Lihalf{9}}
        +450\Lihalf{5}{\Lihalf{7}}^{2} \right)
    \Bigg], \\
    c_3 &= 
    \frac{1}{3\Lihalf{7}^2\Lihalf{3}^2\left( 5\Lihalf{1}\Lihalf{5}-3\Lihalf{3}^2 \right)}
    \Bigg[ -588{\Lihalf{3}}^{4}{\Lihalf{9}}^{2}
    +720 \,{\epsilon}^{2}\Lihalf{1}{\Lihalf{5}}^{3}{\Lihalf{7}}^{2}\\
        &\qquad\qquad+ {\Lihalf{3}}^{3}\left( -525
    \,\Lihalf{5}\Lihalf{7}\Lihalf{9}+1575\,{\Lihalf{7} }^{3} \right) \\
    &\qquad\qquad +  {\Lihalf{3}}^{2}\left(  \left(  \left( -432\,{ \epsilon}^{2}-1125
    \right) {\Lihalf{5}}^{2}+1225\,\Lihalf{1} \Lihalf{9} \right)
    {\Lihalf{7}}^{2}+980\,\Lihalf{1}\Lihalf{5}{\Lihalf{9}}^{2} \right)
    \Bigg],\\
    c_4&=\frac{\left( -1225\,\Lihalf{1}\Lihalf{9}+375\,\Lihalf{3}\Lihalf{7}
    \right)
    \Lihalf{5}-875\,\Lihalf{1}{\Lihalf{7}}^{2}+735\,{\Lihalf{3}}^{2}\Lihalf{9}}
    {3\Lihalf{7}\left(5\Lihalf{1}\Lihalf{5}-3\Lihalf{3}^2\right)}.
\end{aligned}
\end{equation*}

}


\bibliographystyle{plain}
\bibliography{../article,../tiao}
\end{document}